\newcommand{\be}{\begin{equation}}
\newcommand{\PP}{\mathcal{P}}
\newcommand{\ee}{\end{equation}}
\newcommand{\bea}{\begin{eqnarray}}
\newcommand{\eea}{\end{eqnarray}}
\newcommand{\Vol}{\textup{Vol}}
\newcommand{\beq}{\begin{equation}}
\newcommand{\eeq}{\end{equation}}
\renewcommand*{\thefootnote}{\fnsymbol{footnote}}
\begin{document}

\preprint{\texttt{WI-42-2025, IFT-UAM/CSIC-25-156}}

\title{Lorentzian threads and nonlocal computation in holography}
\author{Elena C\'aceres,$^{1}$ Rafael Carrasco$^{2}$ and Juan F. Pedraza$^{2}$}
\affiliation{$^1$Theory Group, Department of Physics, The University of Texas at Austin, Austin, Texas 78712, USA\\
 $^2$Instituto de F\'isica Te\'orica UAM/CSIC,
Calle Nicol\'as Cabrera 13-15, Cantoblanco, 28049 Madrid, Spain}

\begin{abstract}\vspace{-2mm}
Recent advances in holography and quantum gravity have shown that CFTs with classical gravity duals can implement nonlocal quantum computation protocols that appear local from the bulk perspective. We examine the extent to which current prescriptions for holographic complexity support this claim, focusing on the Complexity=Volume (CV) proposal. The reformulation of CV in terms of Lorentzian threads suggests that bulk computations are performed with local gates. However, we find that the original formalism is insufficient when it comes to analyzing the complexity of subsystems and their inequalities. Specifically, standard Lorentzian threads cannot account for the negativity of `mutual complexity' and its higher-partite generalizations. To address this deficiency, we modify the Lorentzian threads program by introducing multiple flavors of threads. Our analysis reveals that an optimal solution for this new program implies the existence of additional types of gates that enable nonlocal computations in the dual CFT. We give a tentative interpretation of the multiflavor program in terms of Lorentzian `hyperthreads,' in analogy with the Riemannian case.
\end{abstract}

\renewcommand*{\thefootnote}{\arabic{footnote}}
\setcounter{footnote}{0}

\maketitle

\section{Introduction}

A major conceptual shift in our understanding of holography has come from viewing gravitational dynamics through the lens of quantum information. The Ryu–Takayanagi (RT) formula \cite{Ryu:2006bv,Ryu:2006ef}, which relates the entanglement entropy in a CFT to the area of an extremal codimension-2 bulk surface, has revealed deep connections between geometry and entanglement. This insight has driven significant progress on the black hole information paradox \cite{Penington:2019npb,Chen:2020hmv,Ghosh:2021axl,Almheiri:2019hni}, bulk reconstruction \cite{Czech:2012bh,Dong:2016eik,Czech:2016xec,Espindola:2018ozt,Bao:2019bib,Guijosa:2022jdo,Leutheusser:2022bgi} and the emergence of spacetime itself \cite{VanRaamsdonk:2010pw,Swingle:2009bg,Lashkari:2013koa,Faulkner:2013ica}.

Entanglement entropy alone, however, does not capture all aspects of holographic spacetimes. In particular, entanglement fails to encode the dynamical behavior of black hole interiors, notably the late-time growth of the Einstein–Rosen bridge \cite{Susskind:2014moa}. This limitation points to the need for a holographic observable capable of probing interior dynamics. A natural candidate is \emph{complexity}, a quantity that captures the computational structure of quantum states \cite{Susskind:2014rva}. Our understanding of complexity in quantum field theory, and of its holographic realization, remains far less complete than that of entanglement, and numerous questions are still open. In this work we investigate the subsystem complexity of holographic systems through the framework of Lorentzian threads \cite{Pedraza:2021mkh,Pedraza:2021fgp}.

A common notion of complexity in quantum mechanical systems is \emph{circuit complexity}, defined as the minimal number of elementary operations (or gates) in a quantum circuit that takes a chosen reference state into a given target state. In quantum field theories, however, a precise and universally accepted definition of complexity is still elusive \cite{Chapman:2017rqy,Jefferson:2017sdb}, owing to various scheme-dependent ambiguities in its formulation. On the gravitational side, several holographic proposals have been put forward to capture the salient features of complexity \cite{Stanford:2014jda,Brown:2015bva,Couch:2016exn,Belin:2021bga}. Among them, the Complexity=Volume (CV) proposal \cite{Stanford:2014jda} bears a close resemblance to the RT prescription for entanglement entropy, providing a complexity-based counterpart to entanglement-based notions of spacetime emergence \cite{Pedraza:2022dqi,Carrasco:2023fcj}. According to CV, the complexity of a CFT state defined on a Cauchy slice $\sigma=\partial A$ (with $A\subset\partial \mathcal{M}$ and $\mathcal{M}$ the bulk manifold) is given by the volume of the maximal bulk codimension-1 surface homologous to $A$,
\begin{equation}
\mathcal{C}(\sigma) = \frac{1}{G_N \ell}
\max_{\Sigma\sim A} \text{Vol}(\Sigma)\,, 
\label{eq:CVduality}
\end{equation}
where $G_N$ is Newton's constant and $\ell$ a length scale.

Beyond their geometric formulations, both entanglement entropy and the CV proposal admit alternative descriptions as convex optimization problems \cite{Headrick:2017ucz}. For instance, the RT formula \cite{Ryu:2006bv,Ryu:2006ef}, which gives the entanglement entropy of a boundary region $A$ on a (time-reflection-symmetric) Cauchy slice as the minimal area of a codimension-2 surface homologous to $A$,
\begin{equation}
S(A) = \frac{1}{4G_N}
\min_{\Gamma\sim A} \text{Area}(\Gamma)\,, 
\label{eq:RT}
\end{equation}
can be cast as the maximal flux of a divergenceless vector field $v$ through $A$, subject to the norm bound $|v|\leq 1$ \cite{Freedman:2016zud}:
\begin{equation}
\hspace{-5pt}S(A) = \frac{1}{4G_N}\max_{v\in\mathcal{F}}\!\int_A\!\! v,
\,\,\; \mathcal{F}
\equiv\left\{v\,\big|\, \nabla\cdot v=0\,,\, |v|\leq 1\right\}.\hspace{-2pt}
\label{eq:RTthreads}
\end{equation}
For any optimal solution $v$, the integral curves of this
vector field define the so-called \emph{bit threads} \cite{Freedman:2016zud}: Planck-thickness flow lines
connecting the region $A$ to its complement $A^c$. Intuitively, one may think of each thread as representing a
pair of maximally entangled qubits shared between $A$
and $A^c$. In this language, the RT prescription becomes a
maximization problem that counts the maximal number
of Bell pairs that may be distilled between $A$ and $A^c$ while
respecting the local density bound \cite{Headrick:2022nbe}. Thus, the bit-thread reformulation provides a more microscopic account of holographic entanglement while remaining formally equivalent to the RT prescription.

An important feature of entanglement entropy is the existence of the so-called entropy inequalities, which constrain the entropies of multipartite systems. The simplest ones are subadditivity and strong subadditivity, given by
\begin{equation}
S(A_1) + S(A_2) \geq S(A_1\cup A_2)\,,
\end{equation}
\vspace{-7mm}
\begin{equation}
S(A_1\cup A_2) + S(A_2\cup A_3)\geq  S(A_2) + S(A_1\cup A_2 \cup A_3)\,,
\end{equation}
and are valid for arbitrary quantum systems. In holography, the RT prescription implies an infinite set of additional inequalities, which can be used to characterize and discriminate theories with possible gravity duals. A notable example is the monogamy of mutual information \cite{Hayden:2011ag}, which for a tripartite system imposes the constraint
\begin{equation}
    \begin{split}
        &\,\,S(A_1\cup A_2) + S(A_1\cup A_3) + S(A_2\cup A_3) \geq \\
       &S(A_1) + S(A_2) + S(A_3)+S(A_1\cup A_2\cup A_3)\,.
    \end{split}
\end{equation}
More generally, for a system partitioned into $N$ components, the complete set of entropy inequalities defines an $M$-dimensional space, where $M = \sum_{k}\binom{N}{k}$ counts the possible nonempty unions of subsystems. This region of allowed entropy configurations is called the \emph{holographic entropy cone} \cite{Bao:2015bfa}, and its intricate structure has been the subject of an extensive body of work in the literature, with far too many contributions to cite individually here.

While these inequalities have a clear geometric counterpart from the RT perspective, proving them in the language of bit threads appears to encounter serious obstructions and may well be impossible in general. For instance, the bit-thread picture is insufficient to simultaneously reproduce the entropies of all individual regions and of all their arbitrary unions, which would be a necessary step in such a proof. A notable exception is the monogamy of mutual information \cite{Cui:2018dyq,Hubeny:2018bri,Agon:2018lwq}, whose validity is guaranteed by the multicommodity theorem for Riemannian flows \cite{Cui:2018dyq}. This theorem states that it is possible to construct a flow that simultaneously reproduces the entropies of all individual regions and the full system, which, in the tripartite case, provides all the required ingredients. An alternative route is to generalize the bit-thread prescription to allow for hyperthreads (or $k$-threads) \cite{Harper:2021uuq,Harper:2022sky}. In this case, flow lines are allowed to split into branches that terminate on $k$ distinct boundary subregions, enabling the encoding of genuine multipartite correlations. From this perspective, a $k$-thread can be interpreted as a fundamental unit of $k$-party entanglement, which is particularly useful for describing situations where the mostly-bipartite conjecture fails \cite{Akers:2019gcv,Balasubramanian:2024ysu,Iizuka:2025ioc,Iizuka:2025bcc,Iizuka:2025caq}.

Given the central role of entropy inequalities and the entropy cone in constraining holographic entanglement, it is natural to ask whether there exist analogous inequalities for subregion complexities that define a \emph{complexity cone}. Any such inequalities are expected to be non-universal, with each set characterizing a distinct bulk prescription for holographic complexity; see \cite{Agon:2018zso,Caceres:2018blh,Caceres:2019pgf} for some preliminary ideas. Despite this expected non-universality, they could still be extremely useful, providing criteria to map the broad landscape of complexity measures from both the bulk and boundary perspectives \cite{Belin:2021bga} (see also \cite{Belin:2022xmt,Myers:2024vve,Caceres:2025myu}). Closely related is the question of whether such inequalities encode a statement about complexity analogous to the mostly-bipartite conjecture for entanglement, or whether no such analogue exists.

Motivated by these questions, we work within the CV proposal and use its reformulation as an optimization problem involving flows in the bulk. In this language, the relevant objects are collections of \emph{timelike} curves, known as \emph{Lorentzian threads}. These curves play a role analogous to that of bit threads for entanglement, but are now interpreted as \emph{gatelines} representing the elementary operations (or gates) needed to construct a target state from a reference state \cite{Pedraza:2021mkh,Pedraza:2021fgp}. When applied to the complexity of the full boundary state (in analogy with the entanglement of a bipartite system), the Lorentzian-thread reformulation naturally suggests that holographic states can be prepared using mostly local gates; this may be regarded as the complexity analogue of the mostly-bipartite conjecture. However, there is robust evidence that this expectation should not hold in general \cite{May:2019odp,May:2022rko,May:2022clu,Dolev:2022gwj,May:2023kfp}. Notably, the connected wedge theorem implies the existence of intrinsically nonlocal quantum computation protocols implemented through bulk dynamics.

To probe this tension more systematically, we first ask whether Lorentzian threads can be used to define and compute subsystem complexity for a partition of the boundary Cauchy slice $\sigma$. In Section~\ref{sec: lorentzian threads}, we review the Lorentzian-thread formulation of CV and identify important shortcomings: in particular, we show that even for bipartite configurations the naive prescription fails to reproduce the expected \emph{superadditivity} of subregion complexities. We trace this failure to the absence of a multicommodity theorem in Lorentzian settings. In Section~\ref{sec: multiple flavors}, we then propose a new program for an arbitrary number of subregions based on families of Lorentzian threads with distinct \emph{flavors}, each associated with a specific subregion, and we show that this multi-flavor framework can simultaneously compute the complexity of individual subregions and of the full state. In Section~\ref{interpretation}, we analyze a discretized version of this program in order to clarify its microscopic interpretation. While the threads of the multi-flavor construction still appear to represent local gates, these basic operations are found to violate superadditivity at the microscopic level. We resolve this issue by a simple change of basis: suitable linear combinations of the basic operations can be made consistent with micro-superadditivity while remaining formally equivalent to CV. These combinations are interpreted as \emph{generalized Lorentzian hyperthreads}, a new type of elementary operation that enables nonlocal computations in the dual CFT. We conclude in Section~\ref{discussion} with a summary of our findings and a discussion of future directions.

\section{Lorentzian threads: a primer}\label{sec: lorentzian threads}

In this section, we review the reformulation of the Complexity=Volume (CV) prescription using flows, as proposed in \cite{Pedraza:2021mkh,Pedraza:2021fgp}. This approach stems from the continuous version of the Lorentzian min flow-max cut theorem introduced in \cite{Headrick:2017ucz}. According to this framework, the holographic complexity $\mathcal{C}(\sigma)$ of a state on a boundary Cauchy slice $\sigma$ is determined by the minimum flux of a divergenceless, norm-bounded vector field $v$ through a boundary surface $A$ anchored to $\sigma$. This formulation offers a more intuitive, microscopic interpretation of CV. Specifically, discretized flows correspond to Lorentzian threads ---timelike worldlines of microscopic thickness that represent unitary operations within a tensor network discretizing spacetime. These Lorentzian threads are essential for constructing bulk analogs of quantum circuits, facilitating the time evolution of dual CFT states \cite{Pedraza:2021mkh,Pedraza:2021fgp}.

After addressing certain subtleties of this reformulation, we will rewrite the flow prescription in terms of measure theory, following \cite{Headrick:2022nbe}. This measure-based approach avoids reliance on vector fields and proves particularly useful in later sections when extending the formalism to accommodate multiple thread flavors. We will conclude this section by discussing the application of this formalism and highlighting some limitations thereof.

\subsection{Flow prescription \& interpretation}\label{Min flow-max cut theorem in complexity}

Let us begin by establishing some notation and conventions. Let $\mathcal{M}$ be a $d$-dimensional, compact, oriented, time-oriented Lorentzian manifold with boundary $\partial \mathcal{M}$. We consider a boundary region $A\subset\partial \mathcal{M}$, such that its causal future coincides with itself, $J^+(A)\cap \partial \mathcal{M}=A$.
Such a region is delimited by a boundary Cauchy slice $\sigma(A)=\partial A$. We then consider the set of all bulk Cauchy slices $\Sigma(A)$ that are homologous to $A$ and, thus, anchored to $\sigma(A)$. Each of these surfaces is a compact, orientable codimension-1 hypersurface-with-boundary, that is everywhere either spacelike or lightlike. Together, these surfaces foliate a bulk causal diamond $D(A)$, anchored to $\sigma(A)$, also known as its  Wheeler–DeWitt (WdW) patch ---see Fig.~\ref{domain} for an illustration.

\begin{figure}
    \includegraphics{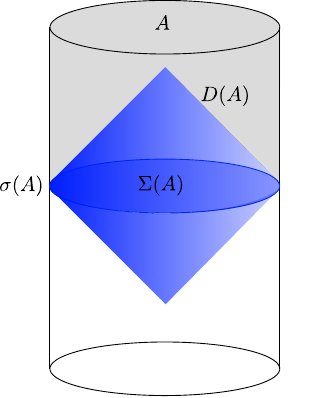}
    \caption{Boundary region $A$, boundary Cauchy slice $\sigma(A)$, and bulk Cauchy slice $\Sigma(A)$. The WdW patch associated with this Cauchy slice, depicted in blue, is a bulk causal diamond $D(A)$ anchored to $\sigma(A)$.}
    \label{domain}
\end{figure}

Next, we define a flow $v$. A flow is a timelike vector field on $\mathcal{M}$ that is (i) future directed, (ii) divergenceless and (iii) norm bounded from below. Mathematically, we state these conditions by defining the set $\mathcal{F}$ such that
\begin{equation}
\mathcal{F}
\equiv\left\{v\,\big|\, v^0>0\,,\, \nabla\cdot v=0\,,\, |v|\geq 1\right\}\,.
\end{equation}
Following \cite{Pedraza:2021mkh,Pedraza:2021fgp}, we can now express holographic complexity as the minimum flux through the region $A$, in units of the arbitrary scale $G_N \ell$,
\begin{equation}
\begin{split}
    \mathcal{C}(\sigma)=\frac{1}{G_N\ell}\min_{v\in \mathcal{F}} \int_{A}v\,,
    \label{flowsprogram}
\end{split}
\end{equation}
with $\int_{A}v=\int_{A}d^{d-1}x\sqrt{|h|}n_\mu v^\mu$. Here, $h$ is the determinant of the induced metric on $\partial\mathcal{M}$ and $n^\mu$ its future directed unit normal vector. The min flow-max cut theorem \cite{Headrick:2017ucz} implies that the optimal solution to this optimization program is equal to the maximal volume of a surface homologous to the region $A$, and thus, (\ref{flowsprogram}) is formally equivalent to the CV proposal (\ref{eq:CVduality}).

As discussed in \cite{Pedraza:2021mkh,Pedraza:2021fgp}, it is useful to discretize the flow program and interpret it in terms of a collection of microscopic Lorentzian threads. This approach parallels the Riemannian `bit threads' program for holographic entanglement entropy \cite{Freedman:2016zud}, which has been extensively used to understand entropy inequalities \cite{Cui:2018dyq,Hubeny:2018bri,Agon:2018lwq} and many other properties of holographic entanglement \cite{Harper:2018sdd,Harper:2019lff,Du:2019emy,Bao:2019wcf,Agon:2019qgh,Agon:2020mvu,Lin:2020yzf,Headrick:2020gyq,Agon:2021tia,Rolph:2021hgz,Mintchev:2022fcp,Gursoy:2023tdx,Caggioli:2024uza,Du:2024xoz,Headrick:2025awv,Wu:2025qwc,Das:2025fav}. These Lorentzian threads correspond to the integral lines of the flow that optimizes \eqref{flowsprogram}, and are characterized by transverse density $\rho=|v|/G_N\ell$. In this discrete version, complexity is determined by the minimal number of threads ending in $A$, $N_A$, which equals the volume of the maximal bulk slice $\Sigma(A)$ homologous to $A$:
\begin{equation}
    \mathcal{C}(\sigma)=\min N_A=\frac{1}{G_N\ell}\max [\textup{Vol}(\Sigma(A))]\,.
\end{equation}
From a more physical perspective, Lorentzian threads can be interpreted as a set of unitary operators (gates) that prepare a state on $\Sigma(A)$, starting from a CFT state in the infinite past of $\mathcal{M}$. In this picture, complexity is given by the minimal number of operations required to go from the reference state to the target state, closely mirroring the concept of computational (circuit) complexity in quantum information theory.

It is important to clarify a subtlety regarding the above interpretation. The proposal in \cite{Pedraza:2021mkh,Pedraza:2021fgp} assumed that the characteristic thickness of Lorentzian threads is $\ell_P^{d-1}$, where  $\ell_P$ is the Planck length and $d$ is the number of spacetime dimensions. This requires that the arbitrary length in the CV proposal should be identified with the Planck length, $\ell\sim\ell_P$. Consequently, the number of gates required to implement bulk evolution aligns with the number of physical degrees of freedom in the smallest tensor network anchored to $\sigma$, as suggested in \cite{Stanford:2014jda}.

However, here we argue that the scale $\ell$ must necessarily be larger than $\ell_P$, and should therefore be interpreted as a coarse-graining scale. To see this, notice that each thread should be capable of connecting multiple degrees of freedom simultaneously. This is necessary because time evolution in a general quantum system must enable the generation of entanglement among different degrees of freedom, and this is only possible if the universal set of gates contains elementary operations that act upon multiple degrees of freedom at once. We define the ratio $\lfloor \ell/\ell_P\rfloor=k$ as the level of $k$-locality of the quantum circuit, as illustrated in Fig.~\ref{tensorNetwork}.
\begin{figure}
    \centering
    \includegraphics[width=0.2\textwidth]{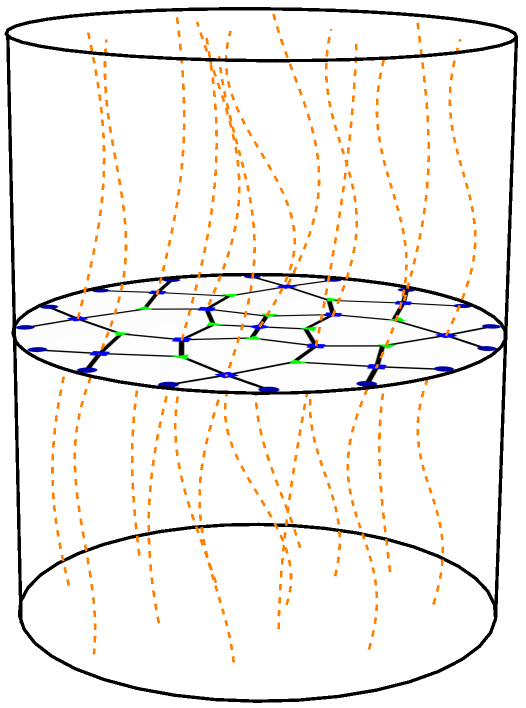}\includegraphics[width=0.045\textwidth,trim={0 -0.85cm 0 0},clip]{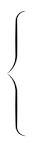}\hspace{-0.33cm}\includegraphics[width=0.23\textwidth,trim={0 -3.4cm 0 0},clip]{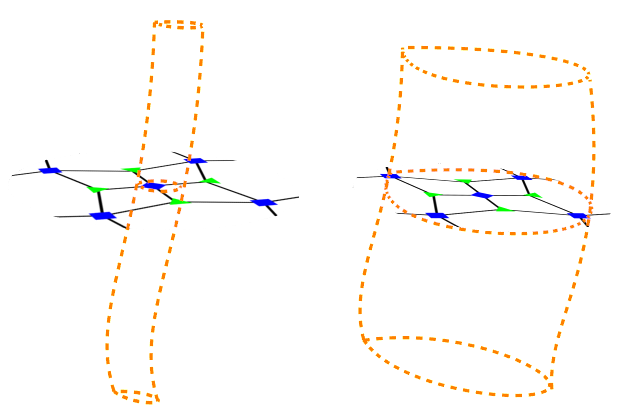}
    \put(-102,20){$k=1$}
    \put(-41,20){$k=5$}
    \caption{Discretization of the bulk slice $\Sigma(A)$ using a physical tensor network. On the right, two threads of different thicknesses are illustrated. The number of degrees of freedom these threads act upon represents the level of $k$-locality in the bulk quantum circuit.}
    \label{tensorNetwork}
\end{figure}

\subsection{Prescription in terms of measures}\label{prescription in terms of measures}
While the flow prescription is valid and useful in scenarios requiring numerical results, for the purposes of this work, it is more practical to focus on the discretized Lorentzian threads approach. Following \cite{Headrick:2022nbe}, we will now formulate the program in the language of measure theory. We refer the reader to Appendix \ref{App: Riemannian threads and hyperthreads}, where we review the Riemannian threads program in this language.

Let $\mathcal{P}$ be the family of subsets of a set $\Omega$.\footnote{For $\mathcal{P}$ to be well-defined, it must include the empty set $\emptyset$ and $\Omega$ itself, and be closed under union and difference of subsets.} A measure is a function $\mu:\mathcal{P}\rightarrow [-\infty,\infty]$ that acts trivially on the empty set, $\mu(\emptyset)=0$ and satisfies $\mu( \cup_{n}\mathcal{P}_n)=\sum_{n}\mu(\mathcal{P}_n)$ for any sequence of disjoint sets $\mathcal{P}_n\in\mathcal{P}$ \cite{rao1983theory}. In our context, $\mathcal{P}$ consists of all timelike, future-directed curves within $\mathcal{M}$ that start in $\partial \mathcal{M}\backslash A$ and end in $A$ ---referred to as Lorentzian threads or simply threads. We will focus on measures $\mu$ that assign each thread a value in $\{0,1\}$. 

To write the program in terms of measures, we need to express the flow conditions using the function $\mu$ \cite{Harper:2021uuq}. The divergenceless condition is naturally satisfied, as the threads terminate on the boundary rather than in the bulk. To incorporate the norm bound, we introduce a delta function $\Delta(x,p)$ defined as:
\begin{equation}
    \Delta(x,p)=\int_p ds \delta(x-y(s))\,,
    \label{deltafunction}
\end{equation}
where $s$ is an affine parameter, and $y(s)$ represents the trajectory of the thread $p$. This function counts the number of times a thread passes through a point $x$, enabling the introduction of a local thread density $\rho(x)$:
\begin{equation}
    \rho(x)=\int_{\mathcal{P}} d\mu \Delta(x,p)\,.
    \label{density}
\end{equation}
Note that the integral over the measure offers a new way to express the sum of the weights of all threads in $\mathcal{P}$. Mathematically, this is written as $\int_{\mathcal{P}}d\mu\equiv \sum_{p\in \mathcal{P}}\mu(p)$. In this representation, the norm bound is expressed as:\footnote{To streamline the notation, we will use units where $G_N\ell = 1$ from this point forward, unless otherwise specified.}
\begin{equation}
    \rho(x)\geq 1\,,\ \ \forall x\in \mathcal{M}\,.
\end{equation}
Finally, we require that any macroscopically small region on the boundary contains a non-zero number of threads. Viewing threads as the integral curves of a bulk vector field, this condition ensures a non-vanishing flux through the boundary and hence that the solution remains homologous to $A$, as we will clarify in more detail below.

Having established all the essential definitions, we now propose the following objective function in the space of measures that yields complexity through minimization:
\begin{equation}
    \mathcal{C}(\sigma)=\min \int_{\mathcal{P}}d\mu\,, \ \textup{s.t. } \rho(x)\geq 1\ \ \forall x\in \mathcal{M}\,.
\label{Program one party complexity}
\end{equation}
To tackle this optimization problem, we formulate a Lagrangian based on the objective function, incorporating a Lagrange multiplier $\lambda$ to enforce the norm bound. As discussed in \cite{boyd2004convex}, the Lagrange multiplier associated with an inequality must always be non-negative, which in our case implies that $\lambda(x)\geq 0\ \forall x\in \mathcal{M}$. Consequently, the primal Lagrangian can be expressed as follows:
\begin{equation}
\begin{split}
    L(\mu,\lambda)=&\int_{\mathcal{P}}d\mu +\int_{\mathcal{M}}\lambda(x)\left(1-\int_{\mathcal{P}}d\mu\Delta(x,p)\right)\\
    =&\int_{\mathcal{M}} \lambda(x)+\int_{\mathcal{P}}d\mu\left(1-\int_{p}ds\lambda(x)\right)\,.
\end{split}
\end{equation}
Note that in the integrals over $\mathcal{M}$, we omit the volume form $d^dx\sqrt{-g}$ for simplicity. The second line introduces a new program that is dual to the original one:
\begin{equation}
    \max \int_{\mathcal{M}}d^dx \sqrt{-g}\lambda(x)\,,\ \textup{s.t.}\int_{p}ds\lambda(x)\leq 1 \ \ \forall p\in\mathcal{P}\,,
    \label{eq: dualprogramcomplexity}
\end{equation}
where integration over $p$ entails computing the integral along the path traced by the thread $p$. Before computing the optimal value of the dual program, it is essential to confirm that its solution matches that of the primal program, a requirement known as strong duality. This is satisfied if Slater's condition holds, which stipulates the existence of a solution (not necessarily optimal) that strictly meets the inequality constraints \cite{Slater2014}. While identifying this solution for the primal program is difficult, for the dual program, setting $\lambda=0 \ \ \forall x\in\mathcal{M}$ yields $\int_p ds \lambda =0\ \ \forall\ p \in\mathcal{P}$. This verifies strong duality, allowing us to confidently proceed with the dual program.

Let us continue our search for the optimal solution to the program, beginning with a heuristic approach followed by a more rigorous analysis. In \cite{Harper:2021uuq}, it was suggested that $\lambda(x)$ can be better understood through its collection of level sets. Each thread crosses a certain number of these surfaces, and the sum of the values of the level sets crossed by each thread must be less than or equal to 1. An optimal solution may be achieved when these level sets are densely packed in a hypersurface of maximal volume $\tilde{\Sigma}(A)$ homologous to $A$ \emph{relative to the boundary}.\footnote{See \cite{Headrick:2017ucz} for a discussion on the distinction between homology and relative homology.} However, if this is true, none of the threads originating in $ J^+(\tilde{\Sigma}(A))\cap \partial\mathcal{M}$ would cross such a surface. Consequently, due to complementary slackness, their weight would be zero, as they do not satisfy the constraint. Unless the surface is homologous to $A$, there would be no flux through part of the boundary, $(J^+(\tilde{\Sigma}(A))\cap \partial\mathcal{M})\backslash A$, contradicting our initial assumption. 
This ensures that every thread in $\mathcal{P}$ crosses the barrier and, therefore, contributes to the flux through region $A$. Furthermore, it is clear that a $\lambda$ with support only on the maximal surface $\Sigma\sim A$ maximizes the objective; any variation that adheres to the density bound would decrease the objective value.

From a mathematical standpoint, we can provide a rigorous solution to the problem instead of relying solely on a pictorial argument. The proof is based on the following theorem (proved in \cite{Caceres:2023ziv} and revisited in Appendix \ref{app: proof}):
\begin{restatable}{theorem}{thm}\label{th: theorem1}
Let $\mathcal{M}$ be a Lorentzian manifold, and let $A$ and $B$ be complementary subsets of the boundary such that $J^+(A) \cap \partial \mathcal{M} = A$ and $J^-(B) \cap \partial \mathcal{M} = B$. Define $\mathcal{P}$ as the set of timelike, future-directed (FD) curves from $B$ to $A$, and let $\lambda(x)$ be a non-negative function on $\mathcal{M}$. The following two statements are then equivalent:
    \begin{equation}
    \begin{split}
        &\exists \psi:\mathcal{M}\rightarrow [-1/2,1/2]\ \ s.t.\ \ \psi|_B=-1/2\,,\\ 
        &\psi|_A=1/2\,,\ |d\psi|\geq \lambda\,,\ d\psi\ \text{timelike \& FD}\,,
        \label{eq: condition1}
    \end{split}
    \end{equation}
and
    \begin{equation}
        \forall p\in \mathcal{P},\quad \int_p ds\lambda\leq 1\,.
        \label{eq: condition2}
    \end{equation}
\end{restatable}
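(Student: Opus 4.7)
The plan is to prove the two implications separately, with one direction a straightforward computation and the other requiring an explicit construction.

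For the direction (\ref{eq: condition1}) $\Rightarrow$ (\ref{eq: condition2}), I would parametrize $p$ by proper time $s$, so $|\dot y| = 1$. Since $d\psi$ and $\dot y$ are both timelike and future-directed, the reverse Schwarz inequality for timelike vectors sharing a causal cone gives
\[
\frac{d\psi}{ds} \;=\; \langle d\psi, \dot y\rangle \;\geq\; |d\psi|\cdot|\dot y| \;\geq\; \lambda .
\]
Integrating along $p$ from $B$ to $A$ then yields $\int_p ds\,\lambda \leq \psi|_A - \psi|_B = 1$, which is (\ref{eq: condition2}).

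For the harder direction (\ref{eq: condition2}) $\Rightarrow$ (\ref{eq: condition1}), I would construct $\psi$ as a $\lambda$-weighted Lorentzian distance function from $B$,
\[
\psi(x) \;:=\; -\tfrac12 \,+\, \sup_{p:\,B\to x}\,\int_p d\tau\,\lambda\,,
\]
with the supremum ranging over timelike future-directed curves from $B$ to $x$ inside $\mathcal{M}$. The trivial path at each $y\in B$ gives $\psi|_B \geq -1/2$, while hypothesis (\ref{eq: condition2}) forces $\psi|_A\leq 1/2$. The function $\psi$ is Lipschitz because $|\psi(x')-\psi(x)|$ is bounded by the maximal $\lambda$-weighted proper time between the two points, and hence differentiable almost everywhere by Rademacher's theorem. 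At a point of differentiability, by extending a near-optimal curve infinitesimally in an arbitrary timelike FD direction $v$, one obtains $\langle d\psi, v\rangle \geq \lambda\,|v|$ for every such $v$; since a 1-form that pairs nonnegatively with the entire future timelike cone must itself be timelike FD (or null), this forces $d\psi$ to be timelike FD with $|d\psi|\geq\lambda$, as required.

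The main obstacle will be matching the boundary conditions exactly and handling the limited regularity of the constructed $\psi$. The equality $\psi|_A = 1/2$ requires the bound (\ref{eq: condition2}) to be saturated by some curve; if it is not, one can add an infinitesimal boost to $\lambda$ supported in a collar near $A$ that produces a saturated problem without spoiling $|d\psi|\geq\lambda$ (the boost can only increase $|d\psi|$ where it acts). The condition $\psi|_B = -1/2$, as opposed to $\geq -1/2$, relies on the assumed causal structure $J^-(B)\cap\partial\mathcal{M}=B$ together with the WdW-patch setup of Fig.~\ref{domain}, which precludes nontrivial timelike curves returning from the bulk to $B$. Finally, since $\psi$ is only Lipschitz in general, the conditions in (\ref{eq: condition1}) should be interpreted almost everywhere (or distributionally), and a causal mollification of $\psi$ yields a smooth representative if a pointwise statement is desired.
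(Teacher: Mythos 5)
Your forward direction is fine and matches the paper: parametrize by proper time, use the reverse Cauchy--Schwarz inequality for timelike vectors, and integrate. That part is essentially identical.

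The backward direction has a genuine gap. You construct a single distance-like function $\psi(x) = -\tfrac12 + \sup_{p:\,B\to x}\int_p \lambda$, and this only gives $\psi|_A \le 1/2$; since the supremum $\sup_{p:B\to y}\int_p\lambda$ will in general take different values at different points $y\in A$, your $\psi$ is not even constant on $A$, let alone equal to $1/2$ there. Your proposed patch (perturbing $\lambda$ near $A$ to force saturation) would have to be done pointwise over $A$ and would in general ruin constancy elsewhere; it is not a rigorous fix. The paper's proof circumvents this by introducing \emph{two} weighted distance functions, one from each boundary piece, say
\[
\psi_-(y)=\sup_{p:\,B\to y}\int_p ds\,\lambda\,,\qquad
\psi_+(y)=\sup_{p:\,y\to A}\int_p ds\,\lambda\,,
\]
and combining them as a \emph{ratio}, $\psi = \dfrac{\psi_- - \psi_+}{2(\psi_- + \psi_+)}$. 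This quotient automatically normalizes the boundary values to exactly $\pm 1/2$ on $A$ and $B$, independently of whether (\ref{eq: condition2}) is saturated. One then checks $|d\psi|\ge\lambda$ for the quotient via the reverse triangle inequality and the constraint $\psi_-+\psi_+\le 1$:
\[
|d\psi|\ge\frac{\psi_+}{(\psi_-+\psi_+)^2}|d\psi_-|+\frac{\psi_-}{(\psi_-+\psi_+)^2}|d\psi_+|\ge\frac{\lambda}{\psi_-+\psi_+}\ge\lambda\,,
\]
a step your single-function construction has no analogue of. Finally, the paper establishes $|d\psi_\pm|\ge\lambda$ and timelike future-directedness via the Hamilton--Jacobi formula (extending the velocity domain to include the null boundary with integrand $-\infty$ outside the causal cone), rather than appealing to Rademacher and mollification; your regularity remarks are in the right spirit but would need to be carried out carefully, and in particular mollification of a Lipschitz function need not preserve the pointwise inequality $|d\psi|\ge\lambda$ or the causal character of $d\psi$ without additional argument.
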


From the conditions in \eqref{eq: condition1}, we can assert that
\begin{equation}
    \max \int_{\mathcal{M}}\lambda \leq \int_\mathcal{M} |d\psi|\,.
\end{equation}
Following \cite{Headrick:2017ucz}, we now define the following set of regions:
\begin{equation}
    r(p)=\{x\in\mathcal{M}|\psi(x)\geq p\}\,,
\end{equation}
and its closure in the bulk, denoted by $\Sigma(p)=\partial r(p)\backslash \partial M$. Since $d\psi$ is timelike and future-directed, each $\Sigma(p)$ represents a Cauchy slice. Furthermore, since $\psi|_B = -1/2$ and $\psi|_A = 1/2$, these slices are homologous to $A$. Applying the `co-area formula' in \cite{Treude:2012np} one finds
\begin{equation}
        \int_{\mathcal{M}}|d\psi|=\int_{-1/2}^{1/2} dp \textup{Vol}(\Sigma(p))\leq \max_{\Sigma(p)\sim A}\textup{Vol}(\Sigma(p))\,.
\end{equation}
This inequality indicates that the optimal solution is bounded above by the volume of the maximal hypersurface homologous to $A$. Demonstrating the equality of this bound is straightforward: the volume of this surface can be represented mathematically as the integral of a delta function over $\mathcal{M}$, with support on $\Sigma$. It is clear that the inequality in \eqref{eq: dualprogramcomplexity} is satisfied, leading to $\max \int_{\mathcal{M}}\lambda\geq \max_{\Sigma(p)\sim A}\textup{Vol}(\Sigma(p))$. Therefore,
\begin{equation}
   \mathcal{C} =\min_{\mu}\int_{\mathcal{P}}d\mu=\max_{\Sigma(p)\sim A}\textup{Vol}(\Sigma(p))\,.
\end{equation}
This completes our reformulation of Lorentzian threads in the language of measure theory.

\subsection{Shortcomings}\label{shortcomings}
In the previous subsection, we showed that, under the CV prescription, the complexity of holographic states can be determined by solving an optimization problem that minimizes the number of threads satisfying the density bound $\rho(x)\geq 1$. This raises a natural question: can this formalism also be applied to compute subregion complexity? Specifically, if the boundary slice $\sigma(A)=\partial A$ is divided into several subregions $\sigma_i(A)$, $i\in\{1,\ldots,N\}$, can the complexity of each subregion be determined simply by counting the number of threads in a given configuration? In the Riemannian case, the bit thread program effectively computes the entanglement entropy of arbitrary subregions, a result guaranteed by the `multicommodity' theorem \cite{Cui:2018dyq}. Specifically, this theorem guarantees an optimal solution represented by the sum of the entropies of all subregions.  However, as we will show below, this approach is insufficient in the Lorentzian case.

To illustrate the point, we will focus on a bipartite system with two boundary regions, $\sigma(A)=\sigma_1(A)\cup\sigma_2(A)$, and using the same methodology, we will attempt to compute $\mathcal{C}(\sigma_1(A))$ and $\mathcal{C}(\sigma_2(A))$ simultaneously. According to the CV proposal, the complexity of a subregion $\sigma_i(A)$ is given by the volume of the maximal Cauchy surface $\Sigma_i(A)$, which is bounded by $\sigma_i(A)$ and the associated HRT surface $\gamma_i$ \cite{Agon:2018zso,Caceres:2018blh}. For a pure overall state in the bipartite case, we find that $\gamma_1=\gamma_2=\gamma$. Additionally, it is crucial to construct the entanglement wedges of the corresponding subregions, defined as the domains of dependence $D_i(A)$ of generic slices $\Sigma_i(A)$ such that $\partial \Sigma_i(A)=\sigma_i(A) \cup \gamma_i$. See Fig. \ref{fig: bipartite} for an illustration.
\begin{figure}
    \centering
    \includegraphics[width=0.4\textwidth]{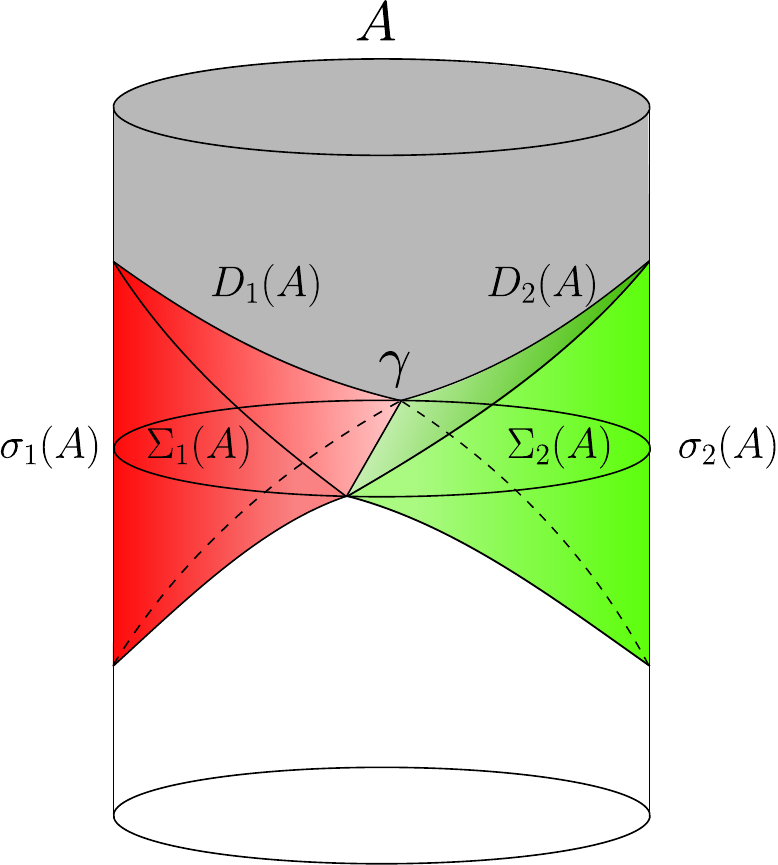}
    \caption{Illustration of a bipartite system: the boundary Cauchy slice $\sigma(A)$ is divided into two regions, $\sigma_1(A)$ and $\sigma_2(A)$, which share a common HRT surface, $\gamma$. The surfaces $\Sigma_1(A)$ and $\Sigma_2(A)$ are bounded by $\sigma_1(A) \cup \gamma$ and $\sigma_2(A) \cup \gamma$, respectively, with their corresponding domains of dependence, $D_1(A)$ and $D_2(A)$, depicted in red and green.}
    \label{fig: bipartite}
\end{figure}

We will now introduce the following subsets of $\mathcal{P}$:
\begin{equation}
    \begin{split}
        \mathcal{P}&=\{\textup{All threads}\}\,,\\
        \mathcal{P}_{1}&=\{\textup{Threads crossing }D_1(A)\}\,,\\
        \mathcal{P}_{2}&=\{\textup{Threads crossing }D_2(A)\}\,.
    \end{split}
\end{equation}
Assuming that a version of the `multicommodity' theorem extends to the Lorentzian case, one could hypothesize the existence of a global solution where $\mathcal{C}(\sigma_i(A))$ is determined by counting the threads passing through the corresponding domains of dependence $D_i(A)$. Following this reasoning, the program for calculating the subregion complexities of both $\sigma_1(A)$ and $\sigma_2(A)$ would be:
\begin{equation}
    \min \left[\int_{\mathcal{P}_1}d\mu+\int_{\mathcal{P}_2}d\mu\right], \ \textup{s.t. } \rho(x)\geq 1\ \ \forall x\in \mathcal{M}\,,
\end{equation}
whose Lagrangian is:
\begin{equation}
\begin{split}
    L=&\int_{\mathcal{P}_1}d\mu+\int_{\mathcal{P}_2}d\mu+\int_{\mathcal{M}}\lambda(x)\left(1-\rho(x)\right)\\
    =&\int_{\mathcal{P}}d\mu+\int_{\mathcal{M}}\lambda(x)\left(1-\int_{\mathcal{P}}d\mu\Delta(x,p)\right).
\end{split}
\end{equation}
As we can see, this program matches with that presented in \eqref{Program one party complexity}, leading to the conclusion that $\mathcal{C}(\sigma_1(A))+\mathcal{C}(\sigma_2(A))=\mathcal{C}(\sigma(A))$. However, this results in a contradiction: the CV proposal asserts that subregion complexity is superadditive \cite{Agon:2018zso,Caceres:2018blh}, implying that the `mutual complexity' $\Delta \mathcal{C}$ must satisfy \cite{Caceres:2019pgf}
\begin{equation}\label{mutualcomplexity}
\Delta \mathcal{C}\equiv \mathcal{C}(\sigma_1(A))+\mathcal{C}(\sigma_2(A))-\mathcal{C}(\sigma(A))\leq 0\,,
\end{equation}
though it does not necessarily vanish in general. This contradiction reveals a flaw in the current formulation of Lorentzian threads while offering valuable insights for developing a more general approach. Specifically, the new framework should be robust enough to allow for the simultaneous calculation of the complexities of all subregions and their union. The development of this formulation will be the primary focus of the next section.

\section{Multiple thread flavors}\label{sec: multiple flavors}
As outlined in the previous section, our focus now shifts to developing a more general formalism capable of calculating both subregion complexities and the complexity of their union. In the Riemannian case, this is achievable due to the multicommodity theorem. For instance, in a tripartite system where the boundary is divided into three regions ---$A$, $B$, and $C=(A\cup B)^c$, analogous to the scenario considered in Sec.~\ref{shortcomings}--- threads can connect $A$ to $B$, $A$ to $C$, and $B$ to $C$. However, in the Lorentzian case, the timelike nature of Lorentzian threads prevents any threads from connecting $D_i(A)$ to $D_j(A)$ for $i \neq j$.

To resolve this issue, we will introduce a generalized measure that distinguishes between these regions while adhering to the previously established conditions. We will begin by applying this approach to a bipartite system for simplicity, and then extend it to an $N$-partite system, thereby enabling the calculation of complexities across systems with an arbitrary number of subregions.

\subsection{Bipartite system}

In this section, we build on the setup described in Sec.~\ref{shortcomings}. We will start with a brief recap of its key elements for clarity. We consider a Cauchy slice $\sigma(A)$ in the boundary CFT, which is divided into two disjoint subregions, $\sigma_1(A)$ and $\sigma_2(A)$, which share a common boundary, $\partial \sigma_1(A) = \partial \sigma_2(A)$. We also introduce the HRT surface $\gamma$, shared by both subregions, and construct the corresponding entanglement wedges, or domains of dependence, $D_1(A)$ and $D_2(A)$.

At this point, we introduce two distinct measures, $\mu_1$ and $\mu_2$, corresponding to two different `flavors' of Lorentzian threads. The first measure, $\mu_1$, will be used exclusively for computing complexities related to $\sigma_1(A)$, such as $\mathcal{C}(\sigma_1(A))$ and $\mathcal{C}(\sigma(A))$, but not $\mathcal{C}(\sigma_2(A))$. Similarly, $\mu_2$ will be employed for computations involving $\sigma_2(A)$. A visual representation of this setup is shown in Fig.~\ref{fig:bipartite thread}. The program we propose for simultaneously calculating $\mathcal{C}(\sigma_1(A))$, $\mathcal{C}(\sigma_2(A))$, and $\mathcal{C}(\sigma(A))$ is as follows:
\begin{equation}
\begin{split}
    \min &\left[\int_{\mathcal{P}_1}d\mu_1+\int_{\mathcal{P}_2}d\mu_2+\frac{1}{2}\int_{\mathcal{P}}(d\mu_1+d\mu_2)\right]\\
    &\textup{s.t. }\rho_1\geq 1,\ \rho_2\geq 1\ \ \forall x\in \mathcal{M}\,,
\end{split}
\label{programbipartite}
\end{equation}
with $\rho_1=\int_{\mathcal{P}}d\mu_1 \Delta(x,p)$ and $\rho_2=\int_{\mathcal{P}}d\mu_2 \Delta(x,p)$. The presence of the $1/2$ factor in the last term will become clear in the discussion below.
\begin{figure}[t!]
    \centering
    \includegraphics[width=0.35\textwidth]{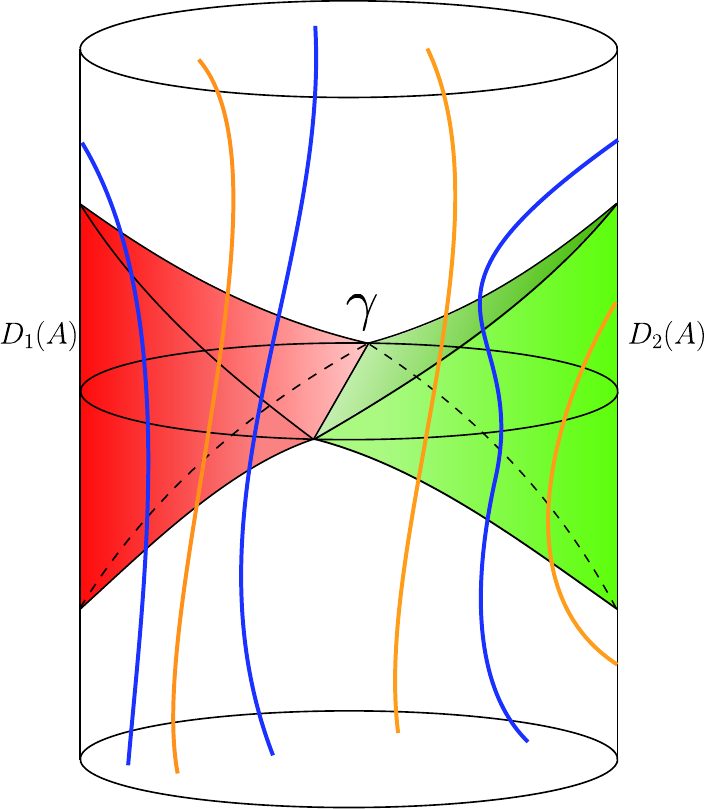}
    \caption{Threads in the manifold $\mathcal{M}$ with measures $\mu_{1}$ or $\mu_{2}$, depicted in orange and blue, respectively. In this scenario, the complexity of $\sigma_1(A)$ is determined by the number of orange threads traversing $D_1(A)$, while the complexity of $\sigma_2(A)$ corresponds to the number of blue threads passing through $D_2(A)$. The complexity of $\sigma(A)$ is calculated as half of the total number of threads present, regardless of their color.}
    \label{fig:bipartite thread}
\end{figure}

As in the previous section, we require that for any macroscopic but small area on $\partial\mathcal{M}$, a set of threads of both types is attached. The resulting Lagrangian is then:
\begin{align}
        L=&\int_{\mathcal{P}_1}d\mu_1+\int_{\mathcal{P}_2}d\mu_2+\frac{1}{2}\int_{\mathcal{P}}(d\mu_1+d\mu_2)\notag\\
        &+\int_{\mathcal{M}}d^dx \sqrt{-g}\left[\lambda_1\left(1-\int_{\mathcal{P}}d\mu_A\Delta(x,p)\right)\right.\notag\\
        &\left.+\lambda_2\left(1-\int_{\mathcal{P}}d\mu_2\Delta(x,p)\right)\right]\\
        =&\int_{\mathcal{M}}d^d x\sqrt{-g}\lambda_1+\int_{\mathcal{P}_1}d\mu_1 \left[3/2-\int_{p}ds\lambda_1\right]\notag\\
        &+\int_{\mathcal{P}_2}d\mu_1 \left[1/2-\int_{p}ds\lambda_1\right]+(1\leftrightarrow 2)\,.\notag
\end{align}
We observe that the program can be decomposed into the sum of two simpler programs: one dependent solely on $\lambda_1$ and $\mu_1$, and the other on $\lambda_2$ and $\mu_2$. Consequently, the dual program,
\begin{equation}
\begin{split}
    \max &\int_{\mathcal{M}} d^d x \sqrt{-g}(\lambda_1+\lambda_2)\ \ \textup{s.t. }\\
  &\int_{p}ds \lambda_{1}\leq 3/2\ \forall\ p\in \mathcal{P}_1\ \&\ p\in \textup{dom}(\mu_1)\,, \\
  &\int_{p}ds \lambda_{1}\leq 1/2\ \forall\ p\in \mathcal{P}_2\ \&\ p\in \textup{dom}(\mu_1)\,,\\
  &\int_{p}ds \lambda_{2}\leq 3/2\ \forall\ p\in \mathcal{P}_2\ \&\ p\in \textup{dom}(\mu_2)\,,\\
  &\int_{p}ds \lambda_{2}\leq 1/2\ \forall\ p\in \mathcal{P}_1\ \&\ p\in \textup{dom}(\mu_2)\,,
 \end{split}
 \label{eq: completeprogram}
\end{equation}
can likewise be expressed as the sum of two simpler ones:
\begin{equation}
\begin{split}
 \max &\int_{\mathcal{M}} d^d x \sqrt{-g}\lambda_1\ \  \textup{s.t. } \\
 &\int_{p}ds \lambda_{1}\leq 3/2\ \forall\ p\in \mathcal{P}_1 \ \&\  p\in \textup{dom}(\mu_1)\,,\\
 &\int_{p}ds \lambda_{1}\leq 1/2\ \forall\ p\in \mathcal{P}_2  \ \&\ p\in \textup{dom}(\mu_1)\,,
 \end{split}
\label{eq: separated program A}
\end{equation}
and
\begin{equation} 
\begin{split}
\max &\int_{\mathcal{M}} d^d x \sqrt{-g}\lambda_2\ \ \textup{s.t. } \\
&\int_{p}ds \lambda_{2}\leq 3/2\ \forall\ p\in \mathcal{P}_2\ \&\ p\in \textup{dom}(\mu_2)\,,\\
&\int_{p}ds \lambda_{2}\leq 1/2\ \forall\ p\in \mathcal{P}_1\ \&\ p\in \textup{dom}(\mu_2)\,.
 \end{split}
 \label{eq: separated program B}
\end{equation}

Following a similar reasoning as in Sec.~\ref{prescription in terms of measures}, as long as the flux condition through the boundary is satisfied, the solution to the first program is the sum of the maximal volume of a surface anchored to $\sigma_1(A) \cup \gamma$ and half the maximal volume of a surface anchored to $\sigma(A)$. Similarly, the solution to the second program is the sum of the maximal volume of a surface anchored to $\sigma_2(A) \cup \gamma$ and half the maximal volume of a surface anchored to $\sigma(A)$. Therefore, the solution to the complete program \eqref{eq: completeprogram} is:
\begin{equation}\label{eq: bipartitesolution}
    \begin{split}
    \mathcal{C}(\sigma_1(A))+\mathcal{C}(\sigma_2(A))+\mathcal{C}(\sigma(A))=\ \ \\
    \max \left[\textup{Vol}(\Sigma_1(A))\right]+\max \left[\textup{Vol}(\Sigma_2(A))\right]\\
    +\max \left[\textup{Vol}(\Sigma(A))\right]\,.\qquad\quad \ \,
    \end{split}
\end{equation}
This result clarifies the inclusion of the $1/2$ factor in the multi-flavor program (\ref{programbipartite}); without this factor, the optimal solution would yield $\mathcal{C}(\sigma_1(A)) + \mathcal{C}(\sigma_2(A)) + 2\mathcal{C}(\sigma(A))$. To conclude this section, it is important to emphasize that since $\Sigma(A)$ is not constrained to cross the HRT surface, its volume will always be bounded from below by the sum of the volumes of $\Sigma_1(A)$ and $\Sigma_2(A)$, thereby ensuring that superadditivity \cite{Agon:2018zso} is maintained.

We have just provided a heuristic explanation for \eqref{eq: bipartitesolution}, and we will now present a more formal proof. To do so, we assume that for each program, $\lambda_i$ ($i=1,2$) has support either in $\mathcal{M}^-=D_1(A)\cup D_2(A)\cup J^-(D_1(A)\cup D_2(A))$ or in $\mathcal{M}^+=D_1(A)\cup D_2(A)\cup J^+(D_1(A)\cup D_2(A))$. This assumption is physically sensible, as any violation would imply that the optimal solution is not spacelike everywhere. We will now first consider the case where the support of such a function is contained within $\mathcal{M}^-$ and then move to the second case. To do so, we will use the following theorem, proved in Appendix \ref{app: proof}:

\begin{restatable}{theorem}{segundo}\label{th: theorem2}
    Let $\mathcal{M}$ be a Lorentzian manifold and $\sigma(A)$ a boundary Cauchy slice divided into two complementary regions, $\sigma_1(A)$ and $\sigma_2(A)$. Let $D_1(A)$ and $D_2(A)$ denote the domains of dependence of any bulk Cauchy slices anchored to $\sigma_1(A)\cup \gamma$ and $\sigma_2(A)\cup \gamma$, respectively, where $\gamma$ is the HRT surface associated with $\sigma_1(A)$ and $\sigma_2(A)$. Define $\mathcal{M}^-$ as the union of $D_1(A)$, $D_2(A)$ and $J^-(D_1(A)\cup D_2(A))$. Let $\partial D^+_1$ represent the future boundary of $D_1(A)$, and $\partial D^+_2$ that of $D_2(A)$, and denote the intersection of the boundary and the causal past of $\sigma_A\cup \sigma_B$ by $C$. Finally, let $\mathcal{P}_1$ be the set of all threads crossing $D_1(A)$ and $\mathcal{P}_2$ the set of all threads passing through $D_2(A)$. The following two statements are equivalent:
    \begin{equation}
    \begin{split}
        &\exists \psi:\mathcal{M}^-\rightarrow [-1/4,5/4]\ \ s.t.\ \psi|_C=-1/4\,,\\
        &\psi|_{\partial D_1^+}=5/4,\ \psi|_{\partial D_2^+}=1/4\,,\\
        &|d\psi|\geq \lambda,\ d\psi\ \text{timelike \& FD}\,,
        \label{eq: condition11}
    \end{split}
    \end{equation}
and
    \begin{equation}
    \begin{split}
        \forall p_1\in \mathcal{P}_1,\quad \int_{p_1} ds\lambda\leq 3/2\,,\\
        \forall p_2\in \mathcal{P}_2,\quad \int_{p_2} ds\lambda\leq 1/2\,,
        \label{eq: condition22}
    \end{split}
    \end{equation}
where $s$ is the proper distance along $p$.
\end{restatable}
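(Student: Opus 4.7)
The plan is to mirror the two-step argument used for Theorem \ref{th: theorem1}, with the essential new feature being two distinct exit surfaces, $\partial D_1^+$ and $\partial D_2^+$, whose different values of $\psi$ correspond to the different integral bounds $3/2$ and $1/2$ on threads in $\mathcal{P}_1$ and $\mathcal{P}_2$.

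For the implication \eqref{eq: condition11}$\Rightarrow$\eqref{eq: condition22}, I parametrize a FD timelike curve $p$ by proper time and use reverse Cauchy--Schwarz between the timelike FD covector $d\psi$ and the unit timelike tangent $\dot p$ to get $d\psi/ds \geq |d\psi| \geq \lambda$ along $p$. Since $\lambda$ is supported in $\mathcal{M}^-$, only the portion $p\cap \mathcal{M}^-$ contributes; this portion enters $\mathcal{M}^-$ through $C$ (where $\psi=-1/4$) and exits through $\partial D_1^+$ (where $\psi=5/4$) for $p_1\in\mathcal{P}_1$ or through $\partial D_2^+$ (where $\psi=1/4$) for $p_2\in\mathcal{P}_2$, yielding the claimed bounds $3/2$ and $1/2$ by the fundamental theorem of calculus.

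For the reverse implication, the candidate $\psi$ is the sup-over-paths construction
\[
\psi(x) \;:=\; -\tfrac{1}{4} \;+\; \sup_{p:\,C\to x,\; p\subset \mathcal{M}^-}\int_p ds\,\lambda ,
\]
directly analogous to the one used in Theorem \ref{th: theorem1}. The condition $\psi|_C=-1/4$ is trivial; the timelike/FD character of $d\psi$ and $|d\psi|\geq\lambda$ follow from the usual infinitesimal-extension argument (appending a short FD step of proper length $ds$ forces $\psi$ to grow by at least $\lambda\,ds$). The novelty is verifying the two distinct upper ceilings: extending any $C\to x$ curve to a full thread and invoking \eqref{eq: condition22} shows that any extension terminating on $\partial D_2^+$ is of type $\mathcal{P}_2$, whence $\psi|_{\partial D_2^+}\leq 1/4$; elsewhere only the weaker $3/2$ bound applies, giving $\psi|_{\partial D_1^+}\leq 5/4$. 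The causal separation of $\partial D_1^+$ and $\partial D_2^+$ (which meet only through $\gamma$) is exactly what makes the two ceilings compatible, since no single FD timelike curve exits through both.

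The step I expect to be the main obstacle is upgrading these inequalities on $\partial D_1^+$ and $\partial D_2^+$ to the equalities demanded by \eqref{eq: condition11}. The cleanest route is a saturation argument via strong duality: for any $\lambda$ that optimizes the dual program \eqref{eq: separated program A}, complementary slackness forces \eqref{eq: condition22} to be saturated on an optimal family of threads, so the supremum above is attained with the prescribed values at the relevant boundaries. A secondary technical point---regularity of $\psi$, which is a priori only lower semicontinuous---is handled by the same mollification step used in Theorem \ref{th: theorem1}.
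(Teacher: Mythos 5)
Your forward direction is correct and matches the paper: reverse Cauchy--Schwarz between $d\psi$ and the unit timelike tangent, together with the boundary values of $\psi$ on $C$, $\partial D_1^+$, $\partial D_2^+$, gives the bounds $3/2$ and $1/2$ directly.

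The reverse direction has a genuine gap, and you have correctly located where it is. Your one-sided construction $\psi(x) = -1/4 + \sup_{p:C\to x}\int_p ds\,\lambda$ only gives $\psi|_{\partial D_2^+}\le 1/4$ and $\psi|_{\partial D_1^+}\le 5/4$, whereas the statement of the theorem requires equality. Your proposed fix via strong duality/complementary slackness cannot work here: Theorem~\ref{th: theorem2} is an equivalence that must hold for an \emph{arbitrary} nonnegative $\lambda$ satisfying \eqref{eq: condition22}, not only for a $\lambda$ that optimizes the dual program \eqref{eq: separated program A}. For a generic such $\lambda$ the bounds in \eqref{eq: condition22} need not be saturated by any thread, so the supremum never reaches the ceiling. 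I'd also flag that your claim that the one-sided sup is ``directly analogous to the one used in Theorem~\ref{th: theorem1}'' is not accurate: the paper does \emph{not} use the one-sided $\psi_-$ alone there either, precisely because of this same issue.

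The paper's actual construction resolves this by introducing both the forward potential $\psi_-$ (sup over $C\to y$) and the backward potentials $\psi_1$, $\psi_2$ (sup over $y\to \partial D_1^+$ in $\mathcal{P}_1$, resp. $y\to\partial D_2^+$ in $\mathcal{P}_2$), and then taking a \emph{normalized} combination,
\begin{equation*}
\psi(y)=\begin{cases}
\dfrac{\psi_- - \psi_2}{4(\psi_- + \psi_2)} & y\in J^-(D_2)\setminus J^-(D_1),\\[2mm]
\dfrac{5\psi_- - \psi_1}{4(\psi_- + \psi_1)} & y\in J^-(D_1)\setminus J^-(D_2),\\[2mm]
\min(\cdot,\cdot) & \text{otherwise.}
\end{cases}
\end{equation*}
The denominators $\psi_-+\psi_i$ do two things at once. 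First, they force the exact boundary values: on $C$ one has $\psi_-=0$ so $\psi=-1/4$, on $\partial D_2^+$ one has $\psi_2=0$ so $\psi=1/4$, and on $\partial D_1^+$ one has $\psi_1=0$ so $\psi=5/4$, regardless of what value $\psi_-$ actually attains there. Second, they recover the bound $|d\psi|\ge\lambda$: after simplification $d\psi$ is a positive combination of $d\psi_-$ and $-d\psi_i$ (both FD timelike), and the reverse triangle inequality together with $\psi_-+\psi_2\le 1/2$ (resp.\ $\psi_-+\psi_1\le 3/2$) yields $|d\psi|\ge\lambda$. This is the same normalization trick that makes Theorem~\ref{th: theorem1}'s $\psi=\tfrac{\psi_--\psi_+}{2(\psi_-+\psi_+)}$ hit $\pm 1/2$ on the boundary exactly. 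You would need to replace your one-sided sup and its duality-based patch with a normalized two-sided construction of this form for the argument to close.
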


The above theorem guarantees the existence of a function defined on $\mathcal{M}^-$ that is everywhere greater than $\lambda_i$. This implies,
\begin{equation}
    \begin{split}
    &\int_{\mathcal{M}}\lambda \leq  \int_{\mathcal{M}} |d\psi|=\int_{-\infty}^{\infty} dp \Vol(m(p)) \\
    &\leq \int_{-1/4}^{1/4} dp \Vol(m(p))+\int_{1/4}^{5/4} dp \Vol(m(p))\,.
    \end{split}
    \label{eq: nesting}
\end{equation}
Since $\psi$ is constant over $C$ and takes different values over $\partial D_1^+$ and $\partial D_2^+$, all level sets $m(p)$ with $p\in(-1/4,1/4)$ have $\sigma(A)$ as their boundary, while those with $p\in(1/4,5/4)$ are anchored to $\sigma_1(A)\cup \gamma$. It is straightforward to see that the first term on the right-hand side of \eqref{eq: nesting} is bounded above by half the volume of the maximal volume slice fully contained in $\mathcal{M}^-$ homologous to $\sigma(A)$, while the second term is bounded by the volume of the maximal slice anchored to $\sigma_1(A)\cup \gamma$. A similar procedure can be applied when assuming that the optimal solution is found within $\mathcal{M}^+$, yielding an analogous result. One would then select the maximal solution between these two cases. Consequently, we conclude that, under our assumptions, the solution to \eqref{eq: separated program A} is bounded above by $\max \left[\textup{Vol}(\Sigma_1(A))\right]+\frac{1}{2}\max \left[\textup{Vol}(\Sigma(A))\right]$. 

On the other hand, provided that half a delta function supported on the maximal surface homologous to $\sigma(A)$ plus another delta with a non-vanishing value on the maximal surface bounded by $\sigma_1(A)\cup \gamma$ satisfies the conditions of the program, we can guarantee that this solution is bounded below by the same value. Combining both bounds, we can conclude that the optimal solution to the program is exactly $\max \left[\textup{Vol}(\Sigma_1(A))\right]+\frac{1}{2}\max \left[\textup{Vol}(\Sigma(A))\right]$.  Applying the same reasoning to \eqref{eq: separated program B} leads us to a full solution to \eqref{eq: completeprogram}, which, as expected, matches \eqref{eq: bipartitesolution}.

Before finishing this subsection, let us briefly justify the central assumption around which our proof is constructed. As the reader may have noticed, the derivation provided here closely resembles the one in \cite{Headrick:2017ucz} used to prove the nesting property for entanglement entropy. A key point in that derivation is the fact that the value of the function $\psi$ on the boundary is known. In our case, this is not true: there is no natural division of the boundary region $J^+(\sigma(A))$ into two pieces where $\psi=5/4$ and $\psi=1/4$ respectively such that their boundaries are just $\sigma_1(A)$ and $\sigma_2(A)$. This prevents the existence of a globally defined function $\psi$ as described in Theorem \ref{th: theorem2}. For this reason, it is convenient to restrict the domain of this function to either $\mathcal{M}^-$ or $\mathcal{M}^+$. Nonetheless, all known solutions to problems of this kind (for instance, the flow-cut theorems or the nesting property \cite{Headrick:2017ucz}) are compatible with a function $\lambda$ whose kernel is given by either $\mathcal{M}\setminus\mathcal{M}^-$ or $\mathcal{M}\setminus\mathcal{M}^+$. Accordingly, the assumption about the support of the function $\lambda$ is well justified.

\subsection{Multipartite generalization}\label{multipartite generalization}
Let us now consider the case of $N$ boundary regions $\sigma_i(A)$ with $i=1,...,N$. We assume that $\displaystyle\cup_{i=1}^N\sigma_i(A)=\sigma(A)$ and $\sigma_i(A)\cap \sigma_j(A)=\emptyset$ for $i\neq j$. We note that, for $N\geq 3$, an arbitrary bulk Cauchy slice $\Sigma$, such that $\partial \Sigma=\sigma(A)$ and which contains all the HRT surfaces associated with each $\sigma_i(A)$, will not be fully contained within the union of domains of dependence $\cup_{i=1}^N D_i(A)$. The complement defines a central region, dubbed the \emph{hole}, which plays a crucial role in quantum error-correction studies of AdS/CFT \cite{Almheiri:2014lwa}. This region is the bulk locus whose information is encoded non-locally on the boundary; in short, it embodies the secret-sharing (or error-correcting) feature of AdS/CFT by storing logical qubits that can survive the erasure of any single boundary region while remaining inaccessible to that region alone.

Because of the hole, we must now introduce a new set of threads that was not needed previously:
\begin{equation}
\begin{split}
    \mathcal{P}&=\{\textup{All threads}\}\,,\\
    \mathcal{P}_{i}&=\{\textup{Threads passing through } D_i(A)\}\,,\\
    \mathcal{P}_{N+1}&=\{\textup{Threads passing through the hole} \}\,.
\end{split}
\end{equation}
Following the logic of the previous subsection, we now propose a program that computes the complexity of each subregion $\sigma_i(A)$ and the complexity of their union,
\begin{equation}
\begin{split}
    \min &\sum_{i=1}^N\left(\int_{\mathcal{P}_i}d\mu_i+\frac{1}{N}\int_{\mathcal{P}} d\mu_i\right)\\
    &\textup{s.t. }\rho_i\geq 1\ \forall i=1,...,N\,.
\end{split}
\label{lagrangianmultipleregions}
\end{equation}
We keep implicit the boundary condition on thread flux, namely, that for any macroscopic, yet small area element of $\partial\mathcal{M}$, a set of threads of each type attaches to the boundary. The corresponding Lagrangian is given by:
\begin{equation}
    \begin{split}
        L=&\displaystyle \sum_{i=1}^N\int_{{\mathcal{P}}_i}d\mu_i+\frac{1}{N}\int_{\mathcal{P}} \sum_{i=1}^N d\mu_i\\
        &+\int_{\mathcal{M}} \sum_{i=1}^N\left(\lambda_i-\int_{\mathcal{P}} d\mu_i\lambda_i \Delta(x,p)\right)\\
        =&\sum_{i=1}^N\left[\int_{\mathcal{M}} \lambda_i+\int_{{\mathcal{P}}_i}d\mu_i\left(\frac{N+1}{N}-\int_p ds \lambda_i\right)\right.\\
        &+\sum_{j\neq i}\int_{{\mathcal{P}}_j}d\mu_i\left(\frac{1}{N}-\int_p ds \lambda_i\right)\\
        &+\left.\int_{{\mathcal{P}}_{N+1}}d\mu_i\left(\frac{1}{N}-\int_p ds \lambda_i\right)\right].
    \end{split}
    \label{programmultipartite}
\end{equation}
Similarly, the dual program decomposes into $N$ independent subprograms, which can be written as:
\begin{equation}
\begin{split}
    \max& \int_{\mathcal{M}} d^dx \sqrt{-g}\displaystyle \sum_{i=1}^N \lambda_i \quad \textup{ s.t. }\\
    & \int_p ds \lambda_i\leq \frac{N+1}{N},\ \forall p\in \mathcal{P}_i\ \&\ p\in \textup{dom}(\mu_i)\,,\\
    & \int_p ds \lambda_i\leq \frac{1}{N},\ \forall p\in \mathcal{P}_j\ \&\ p\in \textup{dom}(\mu_i)\,,\\
    & \int_p ds \lambda_i\leq \frac{1}{N},\ \forall p\in \mathcal{P}_{N+1}\ \&\  p\in \textup{dom}(\mu_i)\,,
\end{split}
\end{equation}
for $j\neq i$. The optimal value for the total equals the sum of the volumes of the maximal slices anchored on each $\sigma_i(A)\cup \gamma_i$, plus the volume of the maximal slice anchored on the union $\sigma(A)=\displaystyle\cup_{i=1}^n \sigma_i(A)$; that is $\sum_{i=1}^N \max[\textup{Vol} (\Sigma_i(A))]+\max[\textup{Vol} (\Sigma(A))]$.

As in the bipartite case, $\Sigma(A)$ is not required to contain any of the individual HRT surfaces, so the volume of $\Sigma(A)$ is always greater than or equal to the sum of the volumes of the $\Sigma_i(A)$. However, even if all the individual HRT surfaces lie within $\Sigma(A)$, the presence of the `hole' guarantees that the inequality is strict whenever $N>2$. We refer to this property as \emph{soft superadditivity}:
\begin{equation}
    \max\left[\textup{Vol}\ \Sigma(A)\right] > \sum_{i=1}^N \max\left[\textup{Vol}\ \Sigma_i(A)\right]\,.
    \label{softsuperadditivity}
\end{equation}
This leads to a weaker inequality for the corresponding complexities, compared with the bipartite case,
\begin{equation}\label{npartitec}
\Delta \mathcal{C}^{(N)}\equiv \sum_{i=1}^N\mathcal{C}(\sigma_i(A))-\mathcal{C}(\sigma(A))< 0\,.
\end{equation}
We will call $\Delta \mathcal{C}^{(N)}$ the \emph{$N$-partite complexity}, thereby generalizing the notion of mutual complexity, given in (\ref{mutualcomplexity}).

To finish this section, we note that the above program can be naturally extended to compute the maximal volume of the surface anchored on the union of all HRT surfaces, that is, the surface defining the hole. Specifically, consider an arbitrary bulk spacelike surface $\tilde{\Sigma}_{N+1}$ anchored on $\displaystyle\cup_{i=1}^N \gamma_i$ and define its domain of dependence $D_{N+1}$. We introduce a new measure $\mu_{N+1}$ to compute quantities involving the latter bulk region, and denote the set of threads passing through it as $\mathcal{P}_{N+1}$. The program in this case would be given by

\begin{equation}
\begin{split}
    \min &\sum_{i=1}^{N+1}\left(\int_{\mathcal{P}_i}d\mu_i+\frac{1}{N+1}\int_{\mathcal{P}} d\mu_i\right)\\
    &\textup{s.t. }\rho_i\geq 1\ \forall i=1,...,N+1\,.
\end{split}
\label{lagrangianmultipleregions2}
\end{equation}
Following the same steps as before, we find in this situation a tighter version of superadditivity, which now allows for possible saturation:
\begin{equation}
\begin{split}
    &\qquad\qquad\max\left[\textup{Vol}(\Sigma(A))\right]\geq\\
    &\sum_{i=1}^N \max[\textup{Vol}( \Sigma_i(A))]
    +\max\left[\textup{Vol}(\Sigma_{N+1})\right].
\end{split}
\end{equation}
This inequality will generally not be saturated unless all the HRT surfaces lie within $\Sigma(A)$. Even so, it will be closer to saturation than its soft version, Eq.~\eqref{softsuperadditivity}, because the volume of the hole,  $\textup{Vol}( \Sigma_{N+1})$, always contributes positively to the right-hand side. We refer to this new inequality as \emph{tight superadditivity}.

It is interesting to ask what $\textup{Vol}(\Sigma_{N+1})$ would mean in terms of complexity. Very recently, Balasubramanian \emph{et al.} introduced the notion of binding complexity \cite{Balasubramanian:2018hsu}: the minimum number of inter-party gates required to generate a multipartite entangled state. Geometrically, those inter-party gates must run through the entanglement shadow, so a natural identification would be:
\begin{equation}
\mathcal{C}_{\text{binding}}=\max\left[\textup{Vol}(\Sigma_{N+1})\right]\,.
\end{equation}
That is, up to an overall normalization fixed by the CV dictionary (recall that we are working in units where $G_N\ell=1$), the hole's maximal volume counts precisely the non-local gates that `bind' the separate subregions into a single code subspace. Assuming this interpretation is correct, tight superadditivity then implies
\begin{equation}\label{tightSA}
\widetilde{\Delta \mathcal{C}}^{(N)}\equiv \Delta \mathcal{C}^{(N)}+\mathcal{C}_{\text{binding}} \leq 0\,,
\end{equation}
where $\Delta \mathcal{C}^{(N)}$ is given in (\ref{npartitec}), and $\widetilde{\Delta \mathcal{C}}^{(N)}$ is a generalized $N$-partite complexity that includes the binding term. This inequality is valid for $N \geq 3$. Together with (\ref{mutualcomplexity}), 
which applies to the bipartite case, these relations form part of what we dub the \emph{complexity cone}, i.e., a set of inequalities characterizing holographic CV complexity.

\section{Interpretation of the multi-flavor program\label{interpretation}}

We have succeeded in constructing a program that computes both the holographic complexity of the full state and the subsystem complexities of arbitrary subregions. However, the microscopic interpretation of the multiple flavors remains unclear. Here, we argue that this program naturally implements the nonlocal bulk operations required to satisfy superadditivity and, possibly, all other inequalities of the complexity cone.

\subsection{Discretization \& elementary gates}
In the previous sections, we have formulated a continuous program of Lorentzian threads with multiple flavors, satisfying certain density bounds. It is now instructive to think about the discretized version of such a program. For concreteness, we will focus our discussion below on the bipartite case, but our analysis can be easily generalized to include an arbitrary number of partitions.

After solving the optimization problem, one may characterize the solution by four 
numbers $N_i^j$, representing threads of type $i \in \{1,2\}$ (that is, those on which 
the measure $\mu_i$ acts non-trivially) crossing the domain of dependence $D_j$, 
with $j \in \{1,2\}$. One then finds that the subregion complexity of $\sigma_k$ is 
given by $\mathcal{C}(\sigma_k) = N^k_k$ (no sum over $k$), while the complexity of 
the full state is $\mathcal{C}(\sigma) = \frac{1}{2}\sum_{i,j} N_i^j$. The left panel of Fig.~\ref{fig:hyper} depicts the four basic threads that can appear in this program. The two subregions $A$ and $B$ are shown in blue and red, respectively, while threads of type $A$ and $B$ also follow the same color code. Threads of one kind that end in the `wrong' region (red threads that end in region $A$, or blue threads that end in region $B$) do not contribute to subregion complexities; only to the complexity of the full state. In the same figure, we have also written down the relative weights, $1$, $0$, or $1/2$, representing the individual contributions of each type of thread to the subregion complexities and to the complexity of the full system.

These four types of threads are sufficient to compute both the subsystem complexities and the complexity of the full state for a holographic system under arbitrary bipartitions. Thus, in analogy with \cite{Pedraza:2021mkh,Pedraza:2021fgp}, one can think of them as representing the set of elementary gates that enter into the definition of circuit complexity for a discretized bulk 
state. However, there are some aspects of this interpretation that are not completely satisfactory.

First, while these threads naively represent quasi-local operations that couple degrees of freedom confined to infinitesimal regions, the crossed threads do not seem to admit such a simple microscopic interpretation. In particular, they do not count towards any of the subregion complexities, so it is 
not clear whether they must attach to physical degrees of freedom in a tensor network discretizing the bulk state. More importantly, there is a fundamental issue with superadditivity. To see this, note that, for a given solution of the optimization problem, the four numbers must be nonnegative, $N_i^j \geq 0$. Other than that, they can generally be quite arbitrary, subject to the constraint imposed by superadditivity, i.e.,
\begin{equation}
\mathcal{C}(\sigma_1)+\mathcal{C}(\sigma_2)-\mathcal{C}(\sigma)
  = \frac{1}{2}\bigl(N_1^1+N_2^2-N_1^2-N_2^1\bigr) \leq 0\,.
\end{equation}
Whenever superadditivity is saturated, the number of `crossed' threads ($N_1^2+N_2^1$) must be equal to the number of non-crossed ones ($N_1^1+N_2^2$); more generally, the crossed ones must dominate in the solution. The problem here arises from micro-superadditivity. Namely, some of the individual gates (the non-crossed ones) contribute in the `wrong' way to the inequality, violating it in a 
microscopic sense.
Macroscopically, this violation is masked by the fact that crossed threads dominate in a given solution, so that the total contribution satisfies superadditivity, thanks to delicate cancellations. From the perspective of a microscopic gate set, this is conceptually unsatisfactory: one would like the elementary operations themselves to respect the basic complexity inequalities, rather than relying on combinations of gates that would individually drive the system 
outside the complexity cone.

Indeed, in the following subsections, we will show that the multi-flavor program can be recast in a way that satisfies superadditivity at the microscopic level. The price to pay, however, is that such a reformulation must explicitly include non-local gates that enable intrinsically non-local computation from the bulk perspective.

\subsection{Hyperthreads as a combination of gates?}

As a first conceptual exercise, we note that the program \eqref{lagrangianmultipleregions2} can be naturally extended by introducing a new type of object: a Lorentzian analogue of hyperthreads, originally defined in the Riemannian setting in \cite{Harper:2021uuq} (see Appendix~\ref{App: Riemannian threads and hyperthreads} for a brief review). Riemannian 
hyperthreads were devised to capture fundamental multipartite entanglement links, which are essential for holographic states in which the mostly-bipartite conjecture fails~\cite{Akers:2019gcv}. In the same spirit, we now postulate the existence of \emph{Lorentzian hyperthreads}, encoding additional elementary non-local gates that act simultaneously on several spatially separated degrees of freedom.

\begin{figure}
    \includegraphics[width=0.45\textwidth]{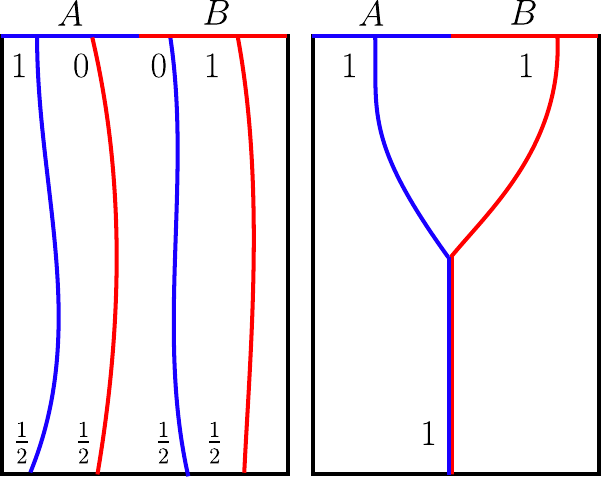}
    \caption{Left: elementary threads of the multi-flavor program. Right: A Lorentzian hyperthread obtained by combining two of these elementary threads. The resulting weights, +1, +1 and +1, signaling the contributions to subregion complexities and the complexity of the full state, are analogous to those in Riemannian hyperthreads. However, in the Lorentzian case, they lead to violations of micro-superadditivity.}
    \label{fig:hyper}
\end{figure}
 
A Lorentzian $k$-thread, or simply a $k$-thread, is an everywhere timelike curve in $\mathcal{M}$, that has $k$ legs, or branches.  We may visualize one of such objects as a thread anchored somewhere on $\partial \mathcal{M}^-$ which splits $k-1$ times, with each branch crossing a distinct domain of dependence $\{D_{1}(A),\ldots,D_{N}(A),D_{N+1}$\} before terminating somewhere on $\partial \mathcal{M}^+$. For an $N$-partite system, the maximal number of branches is then  $k_{\text{max}}=N+2$. We denote by $H_k$ the set of all $k$-threads with fixed $k$, and by $H$ the union of all $H_k$. For convenience, we also define $H_{D_i\ldots D j}$ as sets of $k$-threads which have branches passing through specific domains $D_i,...,D_j$. These can contain threads with values of $k$ strictly larger than the number of indices in their label. To make these definitions concrete, we now present a simple example:

\emph{Example:} Suppose that the boundary Cauchy slice $\sigma(A)$ is divided in three regions $\sigma_1(A),\ \sigma_2(A)$ and $\sigma_3(A)$. Since $N>2$ we have 4 domains of dependence $D_i$, with $i=1,\ldots4$, the last one corresponding to the central hole. In $H_{D_1}$ one finds those 2-threads going through $D_1(A)$ and those 3-, 4- and 5-threads with one branch crossing $D_1(A)$. In $H_{D_1,D_2}$ we find only those 3-, 4- and 5-threads with branches going through $D_1(A)$ and $D_2(A)$. The rest of the sets can be constructed in a similar manner.

Let us focus on the bipartite case, in which only 2- and 3-threads can appear. In this setting, there is only one RT surface, $\gamma_1=\gamma_2=\gamma$, so the central hole is absent. To account for $3$-threads, we introduce a measure $\mu_{1,2}$ that acts solely on $H_3$. The only way to incorporate this measure without changing the objective is as follows:
\begin{equation}
\begin{split}
    \!\!\min \bigg[&\int_{H_{D_1}}\!\!(d\mu_1+d\mu_{1,2})+\int_{H_{D_2}}\!\!(d\mu_2+d\mu_{1,2})\\
    &\quad+\frac{1}{2}\int_{H}(d\mu_1+d\mu_2+2d\mu_{1,2})\bigg]\\
    \!\!\!\textup{s.t.}\quad&\rho_1+\rho_{1,2}\geq 1\,,\ \rho_2+\rho_{1,2}\geq 1\ \,\, \forall x\in \mathcal{M}\,,
\end{split}
\label{programbipartite2}
\end{equation}
where $\rho_1$ and $\rho_2$ are defined as in \eqref{programbipartite}, and $\rho_{1,2}$ is given by an analogous expression. The factor of 2 in front of the measure $\mu_{1,2}$ is not arbitrary: complementary slackness~\cite{boyd2004convex} fixes this coefficient, as any other choice would prevent the 3-threads from contributing to the objective. The constraints, on the other hand, can be written as 
\begin{equation}
\rho_1+\rho_{1,2}\geq 1\,,\qquad\rho_2+\rho_{1,2}\geq 1\,,
\end{equation}
rather than imposing $\rho_1\geq 1$, $\rho_2\geq 1$, and $\rho_{1,2}\geq 1$ separately. Doing so would introduce an additional Lagrange multiplier, and ultimately modify the objective function.

With the formulation of the program justified, we now proceed to derive its dual. We begin with the Lagrangian:
\begin{equation}
\begin{split}
    L=&\int_{H_{D_1}}(d\mu_{1}+d\mu_{1,2})+\int_{H_{D_2}}(d\mu_2+d\mu_{1,2})\\
    &+\frac{1}{2}\int_{H}(d\mu_1+d\mu_2+2d\mu_{1,2})\\
    &+\int_M\lambda_1\left(1-\int_H (d\mu_1+d\mu_{1,2})\right)\\
    &+\int_M\lambda_2\left(1-\int_H (d\mu_2+d\mu_{1,2})\right)\,,\\
    =&\int_{\mathcal{M}}\lambda_1+\int_{H_{D_1}}d\mu_1\left(\frac{3}{2}-\int_h ds \lambda_1\right)\\
    &+\int_{H_{D_2}}d\mu_1\left(\frac{1}{2}-\int_h ds \lambda_1\right)+(1\leftrightarrow 2)\\
    &+\int_H d\mu_{1,2}\left(3-\int_h ds (\lambda_1+\lambda_2)\right),
\end{split}
\end{equation}
where the integral over $h$ denotes integration along an entire $k$-thread. The dual program is:
\begin{equation}\label{eq: dualbipartitehyperthread}
\begin{split}
      &\quad \max \int_\mathcal{M} d^dx\sqrt{-g}(\lambda_1+\lambda_2)\quad \textup{ s.t. }\\
&\int_{h}ds \lambda_{i}\leq 3/2\ \forall\ h\in \mathcal{P}_{i}\  \&\ h\in \textup{dom}(\mu_{i})\,, \\
  &  \int_{h}ds \lambda_{i}\leq 1/2\ \forall\ h\in \mathcal{P}_{j} \ \&\ h\in \textup{dom}(\mu_{i})\,,\\
 & \int_h ds (\lambda_1+\lambda_2)\leq 3\ \forall h\in H_3\ \&\ h\in \textup{dom}(\mu_{1,2})\,.
\end{split}
\end{equation}
In the second and third lines, $i,j = 1,2$, with the implicit understanding that $i \neq j$. A detailed examination of this dual program shows that, if a solution includes hyperthreads, it fails to reproduce the expected subregion complexities. To see this, note that any hyperthread that does not split in the region enclosed by $\Sigma_1(A) \cup \Sigma_2(A) \cup \Sigma(A)$ will cross a barrier of value four, so that the last constraint in \eqref{eq: dualbipartitehyperthread} is violated (see Fig.~\ref{fig:hyperthread}). Therefore, in the bipartite case, an optimal solution, if it exists, must either (i) contain only $2$-threads or (ii) contain $3$-threads, however, it would not yield a result equal to the sum of the subregion complexities, as in eq. \eqref{eq: bipartitesolution}.

\begin{figure}
    \centering
    \includegraphics[width=0.7\linewidth]{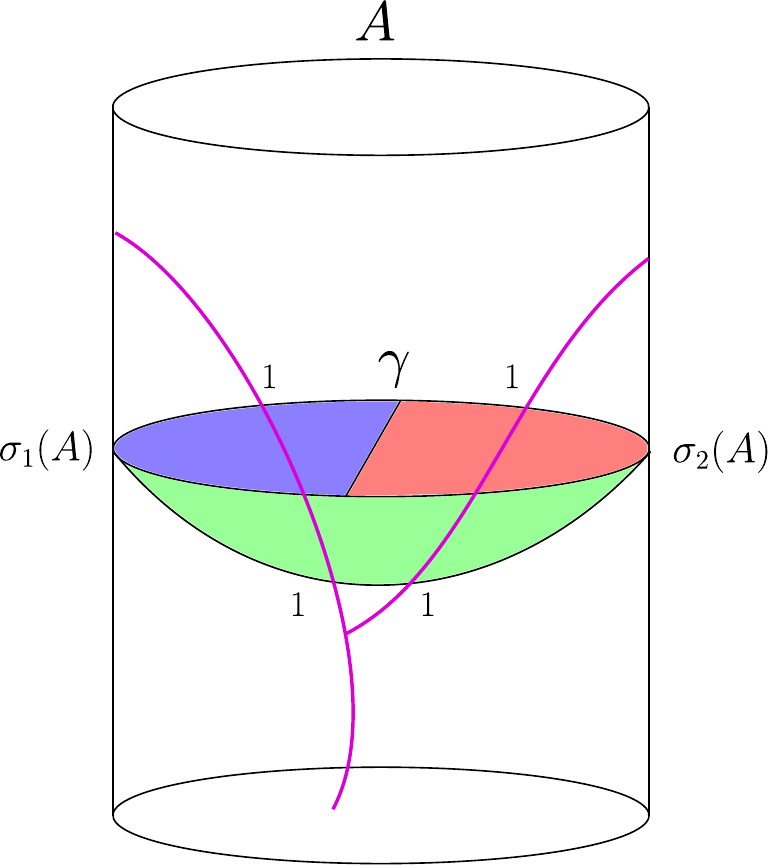}
    \caption{Visual representation of a hyperthread that violates the last constraint of \eqref{eq: dualbipartitehyperthread}. In this example, the hyperthread crosses level sets of total value of four.}
    \label{fig:hyperthread}
\end{figure}

This analysis can be generalized to an $N$-partite system, but the results are not 
particularly illuminating, since the solution suffers from the same issues as 
in the bipartite case. Hence, we will not repeat this analysis here. Nevertheless, 
we can offer some intuition as to why the inclusion of hyperthreads seems to yield 
inconsistent results in the Lorentzian case. The first observation is that, at the 
microscopic level, each hyperthread can be thought of as a combination of gates in 
the multi-flavor program. This is illustrated in Fig.~\ref{fig:hyper}, right panel. 
The composite operation represented by a hyperthread has weights $+1$, $+1$, and $+1$ 
that contribute to the complexities $\mathcal{C}(\sigma_1)$, $\mathcal{C}(\sigma_2)$, 
and $\mathcal{C}(\sigma)$, respectively, analogous to the Riemannian case. However, 
a close inspection of such operations reveals that they are fundamentally incompatible with 
basic complexity inequalities that follow from the CV proposal. In the Riemannian case, 
subadditivity of entanglement entropies requires 
\begin{equation}
S(A_1)+S(A_2)-S(A_1\cup A_2)\geq0\,,
\end{equation}
and each hyperthread contributes positively to the left-hand side of the inequality. 
However, for CV complexity, the relevant inequality is superadditivity, e.q.
(\ref{mutualcomplexity}),
\begin{equation}
\mathcal{C}(\sigma_1)+\mathcal{C}(\sigma_2)-\mathcal{C}(\sigma_1\cup \sigma_2)\leq0\,.
\end{equation}
As each Lorentzian hyperthread also contributes positively to the left-hand side, we conclude that they inevitably violate superadditivity at the microscopic level. 

The above observation does not fully rule out Lorentzian hyperthreads, as the non-crossed 2-threads also suffer from the same issue described above. One can imagine, for instance, that there are other elements in the elementary gate set that may contribute negatively to superadditivity and compensate for the existence of hyperthreads. If this is not the case, then the number of hyperthreads in an optimal solution would be negative, which would be in contradiction with the density bound. More importantly, hyperthreads are also unnatural from the perspective of an elementary gate basis, since they represent operations that violate micro-superadditivity. In the next subsection, we will argue that the elementary gates of the multi-flavor program can still be combined, in a suitable way, to resemble the structure of hyperthreads, while forming a good set of operations that respect superadditivity (\ref{mutualcomplexity}), or its $N$-partite generalization (\ref{tightSA}).

\subsection{Change of basis \& generalized hyperthreads}

The question we would like to address now is whether it is possible to construct a `good' gate set by combining the elementary gates of the multi-flavor program into a new physical basis. Such a basis should satisfy a few conditions: (i) it must be invertible, (ii) it must respect superadditivity at the microscopic level, and (iii) it should yield four non-negative numbers $\tilde N_i \ge 0$ for $\tilde N = \{\tilde N_1,\tilde N_2,\tilde N_3,\tilde N_4\}$, as a solution of the optimization problem. Some of these conditions go hand in hand. For example, hyperthreads cannot be elements of a physical basis because they violate superadditivity: if we force them to be part of the basis, an optimal solution would necessarily involve a negative number of hyperthreads. Aside from these requirements, we still have some freedom in choosing the new basis, as we will see below.

As a further constraint, we would like `standard' 2-threads to be part of the new basis. These can be obtained by combining one thread of each type that crosses the same domain of dependence. Such combinations generate the more common operations with the weights $+1$ and $0$ for the subregion complexities $\mathcal{C}(\sigma_i)$ and $+1$ for the complexity of the full state $\mathcal{C}(\sigma)$. See Fig.~\ref{fig:hyper2}, top panels. Note that these threads saturate superadditivity: individually, their contribution to the mutual complexity vanishes. In fact, this is the reason why the original program of Lorentzian threads \cite{Pedraza:2021mkh,Pedraza:2021fgp} cannot yield a negative mutual complexity, as explained around eq.~(\ref{mutualcomplexity}). As for the remaining two elements, we would like to have a combination that resembles a hyperthread (see Fig.~\ref{fig:hyper}, left panel), but has more weight for the complexity of the full state $\mathcal{C}(\sigma)$. A weight of $+2$ would saturate the inequality, as do the standard 2-threads. The minimal weight that yields a negative mutual complexity is therefore $+3$. In Fig.~\ref{fig:hyper2}, bottom panels, we show examples of such combinations, obtained by combining one of each non-crossed thread type and four crossed ones. We call these objects \emph{generalized hyperthreads}. Mathematically, we can express this change of basis by an algebraic equation of the form  $N= \mathbf{M}\cdot\,\tilde N$, for some appropriate transformation matrix $\mathbf{M}$, where we have defined $N\equiv\{N_1,N_2,N_3,N_4\}=\{N_1^1,N_2^1,N_1^2,N_2^2\}$. For example, for the basis depicted in Fig.~\ref{fig:hyper2}, we have:
\begin{equation}\label{Mexample1}
\begin{pmatrix}
N_1 \\ N_2 \\ N_3 \\ N_4 \\
\end{pmatrix}
=\begin{pmatrix}
1 & 0 & 1 & 1\\
1 & 0 & 4 & 0\\
0 & 1 & 0 & 4\\
0 & 1 & 1 & 1\\
\end{pmatrix}
\begin{pmatrix}
\tilde N_1 \\ \tilde N_2 \\ \tilde N_3 \\ \tilde N_4 \\
\end{pmatrix}\,.
\end{equation}
One can check that $\det \mathbf{M}\neq 0$, so the transformation is invertible.\footnote{Since the matrix is invertible, we can use $\mathbf{M}^{-1}$ to read off the number of generalized hyperthreads corresponding to an optimal configuration expressed in the original 2-thread basis.} Moreover, all elements in this new basis respect superadditivity at the microscopic level, and one can convince oneself that, for any triad $\mathcal{C}(\sigma_1)$, $\mathcal{C}(\sigma_2)$ and $\mathcal{C}(\sigma)$ that respects superadditivity, one could find solutions of the program such that $N_i \ge 0\,$ $\forall i$, where
\begin{align}
\mathcal{C}(\sigma_1)&=\tilde N_1+\tilde N_3+\tilde N_4\,,\nonumber\\
\mathcal{C}(\sigma_2)&=\tilde N_2+\tilde N_3+\tilde N_4\,,\\
\mathcal{C}(\sigma)&=\tilde N_1+\tilde N_2+3(\tilde N_3+\tilde N_4)\,.\nonumber
\end{align}

\begin{figure}
    \includegraphics[width=0.45\textwidth]{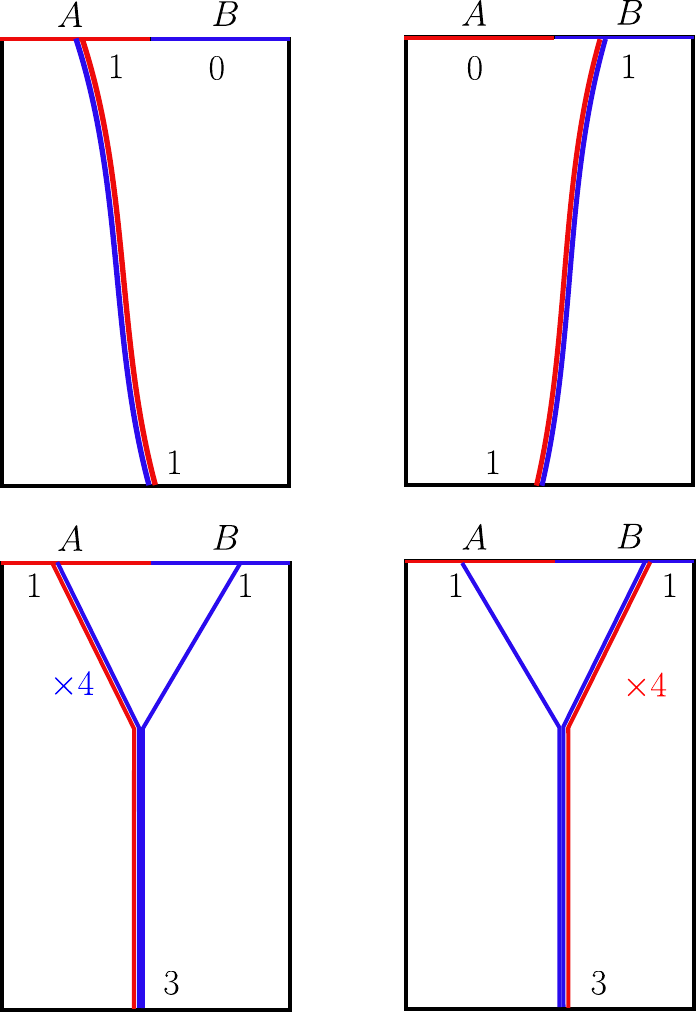}
    \caption{A possible change to a `good' basis. The new elementary operations include ordinary Lorentzian 2-threads (top two) and `generalized' hyperthreads (bottom two), with weights adjusted to respect superaditivity.}
    \label{fig:hyper2}
\end{figure}

A few comments are in order. First, we emphasize that the choice of physical basis is non-unique. The one above is reasonable from a physical standpoint, but there are also many other good choices. For example, the matrix
\begin{equation}
    \mathbf{M}=\begin{pmatrix}
1 & 0 & 1 & 1\\
1 & 0 & 3& 1\\
0 & 1 & 1 & 3\\
0 & 1 & 1 & 1\\
\end{pmatrix}
\end{equation}
also yields generalized hyperthreads with the same weights. More generally, we can increase the weight for $\mathcal{C}(\sigma)$ by an amount of order $\mathcal{O}(N^0)$ and still have a good basis, for example, one might consider
\begin{equation}
 \mathbf{M}=\begin{pmatrix}
1 & 0 & 1 & 1\\
1 & 0 & 6 & 0\\
0 & 1 & 0 & 6\\
0 & 1 & 1 & 1\\
\end{pmatrix}.
\end{equation}
In this last case one can only increase the mutual complexity by multiples of 2; however, since our system has a large-$N$ number of degrees of freedom, this will not affect the result at leading order in $1/N$. Thus, all of these are good transformations leading to a good basis, and are formally equivalent to the multi-flavor program

Second, it is important to emphasize the differences between generalized hyperthreads and normal hyperthreads. In particular, not only are the relative weights of the legs different, but so is the nature of the bulk interactions. For ease of visualization, we have depicted the generalized hyperthreads in Fig.~\ref{fig:hyper2} as normal hyperthreads. While it is true that they have $k$ external legs attached to the AdS boundary, the corresponding bulk configurations can be more general and flexible, and need not split locally at a single point in spacetime. Instead, one should think of them as extended, possibly nonlocal, networks of threads with $k$ boundary-anchored legs.

Finally, while we have only analyzed the bipartite case in detail, it is straightforward to see that the construction of a good basis for the $N$-partite case is always feasible. The key point is that the number of elementary gates in a basis increases with $N$, giving us more freedom to find suitable combinations. In the bipartite case we have four elements in a given basis, whereas for $N \ge 3$ we have $2(N+1)$ basic elements, due to the existence of the hole. A simple construction is to include $N+1$ standard 2-threads, each going through a single domain of dependence, and $N+1$ $(N+1)$-threads, with weights $+1,\ldots,+1$ for $\mathcal{C}(\sigma_i)$ ($\forall i$) and $\mathcal{C}_{\text{binding}}$, and $+(N+2)$ for $\mathcal{C}(\sigma)$. For example, in the tripartite case one such transformation that leads to a good basis is given by
\begin{equation}
\mathbf{M}=\begin{pmatrix}
1 & 0 & 0 & 0 & 1 & 1 & 1 & 1\\
1 & 0 & 0 & 0 & 6 & 0 & 0 & 0\\
0 & 1 & 0 & 0 & 1 & 1 & 1 & 1\\
0 & 1 & 0 & 0 & 0 & 6 & 0 & 0\\
0 & 0 & 1 & 0 & 1 & 1 & 1 & 1\\
0 & 0 & 1 & 0 & 0 & 0 & 6 & 0\\
0 & 0 & 0 & 1 & 1 & 1 & 1 & 1\\
0 & 0 & 0 & 1 & 0 & 0 & 0 & 6\\
\end{pmatrix},
\end{equation}
which directly generalizes (\ref{Mexample1}). This construction is always possible for arbitrary $N$, as can be shown by simple induction. More generally, for $N \ge 3$ one could consider more general bases, including hyperthreads with different $k$. We leave such generalizations to the interested reader.

\section{Discussion\label{discussion}}

In this work, we have revisited the Lorentzian thread formulation of holographic complexity and extended it to multipartite settings, with the goal of defining subregion complexity and exploring its structural constraints. Starting from the CV proposal, we showed that the original Lorentzian thread program \cite{Pedraza:2021mkh,Pedraza:2021fgp} cannot consistently reproduce the superadditivity of subregion complexity, even in simple bipartite configurations. The obstruction can be traced to the absence of a Lorentzian analogue of the multicommodity theorem: a single family of timelike flows is not flexible enough to compute, in one shot, the complexities of all subregions and of their union. To overcome this, we reformulated the problem in terms of measures and introduced a multi-flavor program, in which each boundary subregion is associated with its own family of Lorentzian threads. This enlarged framework yields a well-posed optimization problem that simultaneously computes the complexity of each subregion and of the full state. For general $N\geq3$ partitions, it leads to a set of inequalities, which we dub soft and tight superadditivity, the latter one including a binding contribution. All these inequalities represent (a part of) what we call the \emph{complexity cone} for CV.

A natural question is how this multi-flavor construction should be understood microscopically. To answer this, we analyzed the discretized version of the program and identified the corresponding set of elementary gates. For the bipartite case, the four basic thread types of the multi-flavor program can be viewed as quasi-local gatelines that couple degrees of freedom in a bulk tensor network representation of the subsystems and full CFT state. Taken at face value, these threads provide a complete gate set capable of generating both the subregion complexities and the complexity of the full state. However, we found that some of these elementary operations violate superadditivity at the microscopic level: individually, they contribute with the wrong sign to the mutual complexity, and only delicate cancellations among different gate types ensure that the macroscopic inequalities are satisfied. From the perspective of a microscopic description of complexity, this is conceptually unsatisfactory, as it would mean that the basic operations themselves drive the system outside the complexity cone.

We resolved this tension by exhibiting an explicit change of basis in the space of elementary operations. By taking appropriate linear combinations of the multi-flavor gatelines, one can construct a new physical basis in which each element respects micro-superadditivity while still reproducing the same macroscopic complexities. In this new basis, two of the elements are ordinary Lorentzian 2-threads, which saturate superadditivity and contribute only to the individual subregion complexities and to the full complexity in a balanced way. The remaining elements are \emph{generalized Lorentzian hyperthreads}: bulk objects with $k$ external legs whose weights are chosen so that they contribute positively to the complexity of the full state while inducing a negative mutual complexity. These generalized hyperthreads are intrinsically nonlocal operations, acting simultaneously on several spatially separated degrees of freedom in the dual CFT. Thus, the multi-flavor program can be recast as an optimization over a gate set that is manifestly compatible with the complexity cone, at the price of explicitly incorporating nonlocal gates. In this sense, our analysis refines the naive expectation that holographic states can be prepared using mostly local gates: once subregion complexities and their inequalities are imposed, nonlocal gates of generalized-hyperthread type appear to be an unavoidable part of the microscopic description, in line with the nonlocal computation protocols suggested by the connected wedge theorem \cite{May:2019odp,May:2022rko,May:2022clu,Dolev:2022gwj,May:2023kfp}.

Our results open up several directions for future work. A first, and conceptually close, avenue is to revisit the bit thread formulation of entanglement entropy in light of the multi-flavor structure uncovered here. In the Riemannian setting, the nesting property and the multicommodity theorem already underlie proofs of basic inequalities such as (strong) subadditivity and monogamy of mutual information, and they provide partial access to the holographic entropy cone. However, it is not clear to what extent the full set of holographic entropy inequalities can be realized by a single species of bit threads. The present work suggests introducing multiple kinds of threads---or a suitable generalization of hyperthreads---already at the level of entanglement entropy. It would be very interesting to understand whether a multi-flavor or hyperthread-enhanced bit-thread formalism can probe a larger portion of the entropy cone, or perhaps furnish a more direct geometric characterization of its facets.

A second natural direction is to extend our analysis of subregion complexity beyond the CV proposal. Recently, the Lorentzian thread framework has been generalized in \cite{Caceres:2023ziv} to a broad class of gravitational functionals collectively referred to as Complexity=Anything \cite{Belin:2021bga}. In that setting, different choices of bulk functional correspond to different candidate measures of complexity, all admitting a Lorentzian flow representation. Our multi-flavor construction can, in principle, be transplanted to these more general proposals, leading to a family of subregion complexity functionals and associated complexity cones. Comparing the resulting inequalities, and their degree of saturation, across different bulk prescriptions could provide a powerful diagnostic of how sensitive subregion complexity is to the detailed choice of bulk functional, and might help isolate robustness properties that any sensible holographic complexity measure should satisfy.

A third, more microscopic, line of investigation is to sharpen the connection between generalized Lorentzian hyperthreads and explicit models of quantum computation or quantum error correction \cite{Speelman:2016qua,Dolev:2022rzs,Allerstorfer:2023ycc,Asadi:2024fda,Bluhm:2025qgd}. The multi-flavor program suggests a concrete gate set---consisting of standard local gatelines and genuinely nonlocal hyperthread-like operations---that prepares holographic states while obeying the inequalities of the complexity cone. It would be valuable to construct tensor-network or circuit models in which these gates can be realized explicitly, and to compare the resulting circuit complexity with our geometric subregion complexities. Such models could clarify the role of binding-type contributions (associated to the hole), and might lead to a direct CFT definition of generalized hyperthreads as nonlocal logical gates acting on a code subspace.

More broadly, we expect the notion of a complexity cone to play a role analogous to that of the entropy cone, as a set of universal inequalities constraining which patterns of subregion complexity are compatible with a semiclassical bulk dual. In this paper we have taken a first step in this direction for CV, showing how a multi-flavor Lorentzian thread program and its generalized hyperthread interpretation naturally encode these constraints. It would be fascinating to see whether a more complete understanding of complexity cones, across different holographic prescriptions, can help map bulk observables to boundary observables and eventually provide a unifying organizing principle for holographic complexity measures and their field-theory duals. We hope to return to some of these questions in the near future.

\vspace{2mm}
 
\noindent \emph{Acknowledgments.}
We are grateful to J. Gamb\'in Egea, M. Headrick, V. Patil, A. Russo, A. Svesko and Z. Weller-Davies for many enlightening discussions and correspondence on related subjects throughout the last years.  The work of EC was supported by the National Science Foundation
under grant No. PHY–2210562 and by UT-CNS through a Spark 2025-2029 grant. RC and JFP are supported by the `Atracci\'on de Talento' program grant 2020-T1/TIC-20495 and by the Spanish Research Agency through the grants CEX2020-001007-S, PID2021-123017NB-I00 and PID2024-156043NB-I00, funded by MCIN/AEI/10.13039/501100011033, and by ERDF `A way of making Europe.' RC also acknowledges the additional support of the Spanish grant FPU with reference FPU22/01262.

\appendix

\section{Riemannian threads \& hyperthreads} \label{App: Riemannian threads and hyperthreads}

In this appendix, we review the reformulation of Riemannian bit threads in terms of measure theory, commonly employed in the study of entanglement entropy. In doing so, we follow closely \cite{Harper:2021uuq,Headrick:2022nbe}. Let $\Sigma$ be a Riemannian manifold with boundary. As a first step, we consider a bipartition of the boundary into two regions $A$ and $B$. According to the RT prescription \cite{Ryu:2006bv,Ryu:2006ef}, the entanglement entropy of region $A$ is given by the minimal area among all codimension-1 surfaces $m(A)$ homologous to $A$, that is (in units where $4G_N=1$)
\begin{equation}
    S(A)=\min [\textup{Area}(m(A))]\,.
\end{equation}

To employ the measure approach for computing the entanglement entropy, we define $\mathcal{P}$ as the set of curves with one endpoint in $A$ and the other in $B$. As in the Lorentzian case, it is necessary to use the delta function $\Delta(x,p)$ and the density function $\rho(x)$, both of which were defined earlier in \eqref{deltafunction} and \eqref{density}, respectively. For the reader's convenience, we recall their expressions here:
\begin{equation}
\begin{split}
    \Delta(x,p)=&\int_p ds \delta(x-y(s))\,,\\
     \rho(x)=&\int_{\mathcal{P}} d\mu \Delta(x,p)\,.
\end{split}
\end{equation}
The optimization program that computes the entropy is
\begin{equation}
    S(A)=\max \int_\mathcal{P} d\mu \textup{ s.t. }\rho(x)\leq 1, \forall x\in \Sigma\,.
\end{equation}
We now introduce a Lagrange multiplier $\lambda(x)$ to impose the density constraint:
\begin{equation}
\begin{split}
    L(\mu,\lambda)=&\int_{\mathcal{P}}d\mu +\int_{\Sigma}\lambda(x)\left(1-\int_{\mathcal{P}}d\mu\Delta(x,p)\right)\\
    =&\int_{\Sigma} \lambda(x)+\int_{\mathcal{P}}d\mu\left(1-\int_{p}ds\lambda(x)\right).
\end{split}
\end{equation}
From the second line, we can derive a dual optimization program, namely
\begin{equation}
    \min \int_\Sigma d^d x \sqrt{g} \lambda(x) \textup{ s.t. }\int_p ds \lambda\geq 1 \ \forall,\ p\in \mathcal{P}\,.
\end{equation}
Before proceeding further, it is necessary to ensure that both the primal and dual programs yield the same value. To this end, we have to verify that Slater's condition is satisfied. In particular, one possible solution to the primal problem consists of setting $\mu(p)=0\ \forall p\in \mathcal{P}$. Although this is not the optimal solution, the density bound is strictly satisfied, showing that Slater's condition (and consequently strong duality) holds.

The optimal solution corresponds to the configuration in which all level sets are located on the minimal-area surface homologous to $A$ \cite{Harper:2021uuq}. Any other configuration will either increase the objective or violate the density bound. Therefore,
\begin{equation}
    S(A)=\max \int_{\mathcal{P}}d\mu=\min [\textup{Area}(m(A))]\,.
\end{equation}
Up to this point, we have focused solely on region $A$. However, since $A$ and $B$ are complementary, they have exactly the same entanglement entropy.

If instead of partitioning the boundary of $\Sigma$ into two regions we partition it into $N$ regions, it is convenient to introduce the following sets:
\begin{equation}
\begin{split}
    \mathcal{P}_{ij}&=\{\textup{Threads joining }A_i \textup{ and } A_j\}\,,\\
    \mathcal{P}&=\{\textup{All threads}\}\,.
\end{split}
\end{equation}
The program that simultaneously computes all the entropies in terms of threads is
\begin{equation}
\begin{split}
    &\displaystyle \sum_{i=1}^N S(A_i)=\max \sum_{i,j=1}^N \int_{\mathcal{P}_{ij}}d\mu\,,\\
    &\textup{ s.t. } \rho(x)\leq 1,\ \forall x\in \Sigma\,.
\end{split}
\end{equation}

We can now generalize the notion of a thread in order to study multipartite entanglement. Consider a partition of the boundary into $N$ regions. In addition to the usual threads that join two different regions, we introduce a $k$-thread (or hyperthread) as a curve that splits at several points and has a total of $k$ endpoints (with $k\leq N$). It is important to emphasize that each endpoint must lie in a different region and cannot end in the bulk. The set of all such threads, denoted by $H$, can be decomposed as the union of $k$-threads joining the $k$ regions $A_i, A_j, \dots, A_k$, which we denote by $H_{A_i A_j\dots A_k}$. Similarly, the set of $k$-threads with one endpoint in $A_i$ will be written as $H_{i}$. Finally, $H_k$ denotes the set of all $k$-threads.

The program that allows for the computation of the entropies of all regions is then given by
\begin{equation}
    \begin{split}
    \displaystyle \sum_{i=1}^N S(A_i)=\max\sum_{i=1}^N \int_{H_{\sigma_i}}d\mu \textup{ s.t. }\rho(x)\leq1\ \forall x\in \Sigma\,, \\
=\max\sum_{i=2}^N k\int_{H_k}d\mu \textup{ s.t. }\rho(x)\leq1\ \forall x\in \Sigma\,.
    \end{split}
\end{equation}
The Lagrangian corresponding to this program is
\begin{equation}
    \begin{split}
    L=&\displaystyle \sum_{i=2}^N k\int_{H_k}d\mu+\int_{\Sigma} \lambda(1-\rho(x))\\
    =&\int_{\Sigma} \lambda +\sum_{k=2}\int_{H_k}\left(k-\int_h ds \lambda\right).
    \end{split}
\end{equation}
The second line provides a new dual program:
\begin{equation}
    \min \int_{\Sigma}d^dx\sqrt{g}\lambda(x)\textup{ s.t. } \int_h ds\lambda(x)\geq k\ \forall h\in H_k\,.
\end{equation}
This is just a generalization of the program with 2-threads only, so the optimal solution is exactly the same and equal to the sum of the areas of the surfaces homologous to each region. One can show that, although the objective is not modified with respect to the original program, these hyperthreads have a maximal contribution which is non-trivial. In particular, this contribution has been computed in the case where only $2$- and $N$-threads are present. The contribution of each $N$-thread is
\begin{equation}
    \frac{1}{\alpha}\textup{area}(t_n)\,,
\end{equation}
where $t_n$ is the maximal area that can only be crossed by $N$-threads but not by $2$-threads and $\alpha$ is the number of times this surface is crossed. In this case, the contribution of each element in $H_2$ is reduced to
\begin{equation}
    \frac{1}{2}\left(\textup{area}(m_2)-\frac{n}{\alpha}\textup{area}(t_n)\right)\,.
\end{equation}

\section{Proofs of theorems}\label{app: proof}

The objective of this appendix is to provide a proof of Theorems 1 and 2, used in the main text. To this end, we follow an approach similar to that presented in \cite{Headrick:2022nbe}. For the sake of clarity, we restate the theorems here:

\thm*

\begin{proof}
Assume that condition \eqref{eq: condition1} holds. Consequently, we have:
\begin{equation}
\begin{split}
    1=&\psi|_A^B=\int_p d\psi =\int_p ds \frac{d\psi}{ds}=\int_p ds \frac{d y^\mu}{ds}\partial_\mu \psi\\
    =&\int_p ds \frac{d y^\mu}{ds}\partial^\nu \psi g_{\mu\nu}\,,
\end{split}
\end{equation}
where $y^\mu$ represents the trajectory of the thread. Before proceeding further, it is important to note that $d\psi$ is timelike and future directed. The vector field dual to this one ($\partial^\mu \psi\partial_\mu$) will also be timelike but past directed. Since the thread is future directed, the vector field $\frac{dy^\mu}{ds}\partial_\mu$ will likewise be future directed. It can be shown that, under these conditions, $g_{\mu\nu}u^\mu v^\nu\geq |u||v|$ for $u$ timelike, future directed, and $v$ timelike, past directed. Hence,
\begin{equation}
    1\geq \int_p ds\left|\frac{d y^\mu}{ds}\right||d\psi| \,.
\end{equation}
Given that, by the definition of proper distance, $\left|\frac{dy^\mu}{ds}\right|=1$, it follows that
\begin{equation}
    1\geq \int_p ds|d\psi| \geq \int_p ds\lambda\,.
\end{equation}
In other words, $\int_p ds \lambda \leq 1$, leading to \eqref{eq: condition2}.\\

We now show the converse implication. Suppose \eqref{eq: condition2} holds. We can express the integral over the thread as $\int_p ds \lambda=\int_p dt |-\dot{x}|\lambda$, where $-\dot{x}$ is the covector associated with the tangent to the curve $p$. The minus sign is included because the tangent vector is future directed. We define
\begin{equation}
\begin{split}
    &\psi_-(y):=\sup_{\substack{p\ \textup{timelike}\\\textup{from } A\ \textup{to}\ y}}\int_p dt |-\dot{x}|\lambda\,,\\
    &\psi_+(y):=\sup_{\substack{p\ \textup{timelike}\\\textup{from } y\ \textup{to}\ B}}\int_p dt |-\dot{x}|\lambda\,. 
\end{split}
\end{equation}
where the supremum is taken over all curves going from $A$ to $y$ and from $y$ to $B$, respectively. By assumption,
\begin{equation}
    \psi_-(y)+\psi_+\leq 1\,,
\end{equation}
and
\begin{equation}
    \lim_{y\rightarrow A} \psi_-(y)=0,\ \ \  \lim_{y\rightarrow B} \psi_+(y)=0\,.
\end{equation}
Let us now calculate the gradient of $\psi_-$. Provided that the integrand is a differentiable function of $\dot{x}$ for timelike, future-directed curves, the Hamilton–Jacobi formula\footnote{Hamilton–Jacobi formula asserts that the variation of the on-shell action with respect to changes in the final position is given by the canonical momentum at that point.} can be applied \cite{Headrick:2022nbe}. Although the optimal thread may be lightlike at some points, these issues can be resolved by extending the domain to the entire tangent space and constraining the integrand to be $-\infty$ whenever the velocity $-\dot{x}$ lies outside the future light cone.
\begin{displaymath}
\psi_-(y)=\sup_{\substack{q\ \textup{timelike}\\\textup{from } A\ \textup{to}\ y}}\int_q dt\left\{ \begin{array}{ll}
|-\dot{x}|\lambda & -\dot{x}\in \mathfrak{j}^+\\
-\infty &  \textup{otherwise} \\
\end{array}
\right.,
\end{displaymath}
\begin{displaymath}
\psi_+(y)=\sup_{\substack{q\ \textup{timelike}\\\textup{from } y\ \textup{to}\ B}}\int_q dt\left\{ \begin{array}{ll}
|-\dot{x}|\lambda & -\dot{x}\in \mathfrak{j}^+\\
-\infty &  \textup{otherwise} \\
\end{array}
\right. ,
\end{displaymath}
where $\mathfrak{j}^+$ denotes the set of timelike and future-directed covectors. When $-\dot{x}$ is timelike (which is the case of interest for us), we find that $\pi_{\pm\mu}=\partial_{\dot{x}^\mu}(|-\dot{x}|\lambda)=-\lambda \dot{x}_{\mu}/|-\dot{x}|$. Therefore,
\begin{equation}
    |d\psi_\pm|^2=\frac{\lambda^2 \dot{x}_\mu\dot{x}^\mu}{|-\dot{x}|^2}\geq \lambda^2\,.
\end{equation}

Thus $|d\psi_\pm|\geq \lambda$. We choose the following combination of these two functions to define $\psi(x)$:
\begin{equation}
    \psi(x)=\frac{\psi_-(x)-\psi_+(x)}{2(\psi_-(x)+\psi_+(x))}\,,
\end{equation}
whose exterior derivative is
\begin{equation}
    d\psi=\frac{\psi_+}{(\psi_-+\psi_+)^2}d\psi_-+\frac{\psi_-}{(\psi_-+\psi_+)^2}(-d\psi_+)\,.
\end{equation}
Since $\psi_-$ increases along a timelike curve, $d\psi_-$ will be future directed, while, as $\psi_+$ decreases along it, $d\psi_+$ must be past directed. Taking into account the reverse triangle inequality \cite{Minguzzi:2019mbe}, we find
\begin{equation}
    \begin{split}
    |d\psi|&\geq \frac{\psi_+}{(\psi_-+\psi_+)^2}|d\psi_-|+\frac{\psi_-}{(\psi_-+\psi_+)^2}|d\psi_+|\\
    &\geq \frac{1}{(\psi_-+\psi_+)}\lambda\geq \lambda\,.
    \end{split}
\end{equation}
In the second inequality we used that $|d\psi_\pm|\geq \lambda$, and in the last one that $\psi_-+\psi_+\leq 1$. Therefore, we have shown that \eqref{eq: condition2} implies \eqref{eq: condition1}.
\end{proof}

\segundo*

\begin{proof}
The proof of this theorem closely follows that of the previous one. We begin by assuming that condition \eqref{eq: condition11} holds. For any thread $p_B \in \PP_B$, one has
    \begin{equation}
        \begin{split}
        1/2&=\psi|_C^{\partial D_2^+}=\int_{p_2}d\psi=\int_{p_2} ds \frac{d\psi}{ds}=\int_{p_2} ds \frac{dy^\mu }{ds}\partial^\nu \psi g_{\mu\nu}\\
        &\geq \int_{p_2} ds \left|\frac{dy^\mu }{ds}\right| |d\psi|\geq \int_{p_2} ds \lambda\,.
        \end{split}
    \end{equation} 
A similar calculation shows that $\!\int_{p_1} ds\lambda\leq 3/2,\ \forall p_1\in \PP_1$.

To show the converse, we define the functions
\begin{align}
    \notag&\psi_-(y):=\sup_{\substack{p\ \textup{timelike}\\\textup{from } C\ \textup{to}\ y}}\int_p dt |-\dot{x}|\lambda\,,\\
    &\psi_{1}(y):=\sup_{\substack{p_1\in \PP_1\\\textup{from } y\ \textup{to}\ \partial D^+_1}}\int_p dt |-\dot{x}|\lambda\,,\\ 
    \notag&\psi_{2}(y):=\sup_{\substack{p_2\in \PP_2\\\textup{from } y\ \textup{to}\ \partial D^+_2}}\int_p dt |-\dot{x}|\lambda\,. 
\end{align}
The first function $\psi_-(y)$ is well defined for the whole region $\mathcal{M}^-$, in contrast to $\psi_1$ and $\psi_2$, whose domains are $J^-(D_1(A))$ and $J^+(D_2(A))$, respectively. From \eqref{eq: condition22}, it follows that
\begin{equation}
    \begin{split}
        \psi_-(y)+\psi_1(y)\leq 3/2,\quad \forall y\in J^-(D_1(A))\,,\\
        \psi_-(y)+\psi_2(y)\leq 1/2,\quad \forall y\in J^-(D_2(A))\,,\\
    \end{split}
\end{equation}
and $\lim_{y\rightarrow C}\psi_-(y)=0$, $\lim_{y\rightarrow D_{1,2}}\psi_{1,2}(y)=0$. Again, to apply the Hamilton–Jacobi formula, we enlarge the domain of these functions to allow the existence of lightlike threads.
\begin{displaymath}
\psi_-(y)=\sup_{\substack{q\ \textup{timelike}\\\textup{from } C\ \textup{to}\ y}}\int_q dt\left\{ \begin{array}{ll}
|-\dot{x}|\lambda & -\dot{x}\in \mathfrak{j}^+\\
-\infty &  \textup{otherwise} \\
\end{array}
\right.,
\end{displaymath}
\begin{displaymath}
\psi_{1,2}(y)=\sup_{\substack{p_{1,2}\in \PP_{1,2}\\\textup{from } y\ \textup{to}\ \partial D_{1,2}^+}}\int_q dt\left\{ \begin{array}{ll}
|-\dot{x}|\lambda & -\dot{x}\in \mathfrak{j}^+\\
-\infty &  \textup{otherwise} \\
\end{array}
\right. .
\end{displaymath}

It is immediate to check that $|d\psi_-|$ and $|d\psi_{1,2}|$ are greater than $\lambda$, so one can define the function $\psi$ as follows
\begin{widetext}
\begin{equation}
    \psi(y)=\left\{ \begin{array}{ll}
\frac{\psi_--\psi_2}{4(\psi_-+\psi_2)} & \forall y\in J^-(D_2(A))\backslash J^-(D_1(A))\\
\frac{5\psi_--\psi_1}{4(\psi_-+\psi_1)} & \forall y\in J^-(D_1(A))\backslash J^-(D_2(A))\\
\min\left(\frac{\psi_--\psi_2}{4(\psi_-+\psi_2)},\frac{5\psi_--\psi_1}{4(\psi_-+\psi_1)}\right) & \textup{otherwise}
\end{array}
\right. .
\end{equation}
\end{widetext}
This function is continuous and satisfies all the requirements in \eqref{eq: condition11}.
\end{proof}

\bibliographystyle{apsrev4-2}
\bibliography{LTsRefs}

\end{document}